\documentclass[journal,comsoc]{IEEEtran}
\usepackage{graphicx}
\usepackage{amssymb}
\usepackage{amsmath}
\usepackage{algcompatible}
\usepackage{algorithm}
\usepackage{algpseudocode}
\usepackage{algpseudocodex}
\usepackage{array}
\usepackage{enumitem}
\usepackage{textcomp}
\usepackage{physics}
\usepackage{stfloats}
\usepackage{url}
\usepackage{hyperref}
\usepackage{verbatim}
\usepackage{booktabs}
\usepackage{tabularx}
\usepackage{cite}
\usepackage[short, c3]{optidef}
\usepackage[mathlines]{lineno}
\usepackage{color}
\usepackage{soul}
\usepackage{threeparttable}
\usepackage{tabulary}
\usepackage{multirow}
\usepackage{pbox}
\usepackage{multicol}
\usepackage{lipsum}
\usepackage{amsthm}
\usepackage{relsize}
\usepackage{epstopdf}
\usepackage{mathtools}
\usepackage{calc}
\usepackage{makecell}
\usepackage{pgfplots}\pgfplotsset{compat=1.18}
\usepackage[caption=false]{subfig}
\newtheorem{definition}{Definition}
\newtheorem{theorem}{Theorem}
\newtheorem{lemma}{Lemma}
\newtheorem{corollary}{Corollary}

\newtheorem{remark}{Remark}

\newcommand{\myrule}{\vspace{-0.6\baselineskip}\hrulefill\vspace{-0.1\baselineskip}}

\begin{document}

\title{\huge Joint Load Balancing and {Transmit} Power Control for Energy Efficiency Maximization {in the} Satellite-Cell-Free Massive MIMO Uplink}
\author{Ngo Tran Anh Thu,  \textit{Student Member}, \textit{IEEE}, Lo Hai Long, Le Duc Anh Vu, \\ Trinh Van Chien, \textit{Member}, \textit{IEEE} and Lajos Hanzo, \textit{Life Fellow}, \textit{IEEE}
\thanks{Ngo Tran Anh Thu, Lo Hai Long, Le Duc Anh Vu, and Trinh Van Chien are with the School of Information and Communications Technology (SoICT), Hanoi University of Science and Technology (HUST), Vietnam (email: anhthungo.tr@gmail.com, 
hailong2004ptcnn@gmail.com, anhvuleduc@gmail.com, chientv@soict.hust.edu.vn). Lajos Hanzo is with the Department of Electronics and Computer Science, University of Southampton,
U.K. (email: hanzo@soton.ac.uk). This research is funded by the Vietnam Ministry of Education and Training under
project number B2025-BKA-04 for Trinh Van Chien. The financial support of the following Engineering and Physical Sciences Research Council (EPSRC) projects is gratefully acknowledged for Lajos Hanzo: Platform for Driving Ultimate Connectivity (TITAN) (EP/X04047X/1; EP/Y037243/1); Robust and Reliable Quantum Computing (RoaRQ, EP/W032635/1); PerCom  EP/X01228X/1; EP/Y026721/1, India-UK Intelligent Spectrum Innovation ICON UKRI-1859.}
\thanks{\textit{Corresponding author}: Trinh Van Chien.
}}

\markboth{Journal of \LaTeX\ Class Files,~Vol.~18, No.~9, September~2020}%
{How to Use the IEEEtran \LaTeX \ Templates}

\maketitle
\begin{abstract}

The seamless integration of non-terrestrial and terrestrial infrastructures is a key enabler for ubiquitous connectivity in {next-generation (NG)} wireless networks. {We investigate} a hybrid satellite-cell-free Massive MIMO system, where multiple low-Earth-orbit (LEO) satellites jointly serve users {in unison} with terrestrial access points (APs) under {realistic} imperfect channel state information and practical {user} association constraints. We first derive closed-form expressions of the uplink ergodic throughput by exploiting maximum ratio combining (MRC) {for transmission} over spatially correlated Rician fading channels. Our analysis reveals the {characteristic} impact of {both} user-satellite and user-AP association patterns on {both the} spectral efficiency and {rate-}fairness {achieved}. We then formulate an {energy efficiency optimization} problem {under} joint user association and power control. Since the problems are inherently NP-hard due to {the} binary {nature of the user-}association variables, we develop an improved Differential Evolution (IDE) framework that efficiently explores {the} feasible solutions in polynomial time. Numerical results validate our analysis and show that the proposed hybrid scheme substantially improves energy efficiency and network throughput. For large-scale scenarios, the DE framework provides practical user-satellite-AP association guidelines, enabling scalable performance gains.
\end{abstract}

\begin{IEEEkeywords}
NG, Satellite, Cell-free Massive MIMO, Energy efficiency, Differential Evolution.
\end{IEEEkeywords}

\section{Introduction}\label{sec:Intro}

\textcolor{black}{The evolution towards next-generation (NG) networks is expected to redefine wireless connectivity through the integration of diverse advanced technologies \cite{10054381,7378276,7244171,chen20235g,10640072,10841966}. Specifically, Wang et al.~\cite{10054381} identify terahertz communications, AI-native architectures, integrated sensing and communication (ISAC), and satellite-terrestrial convergence as key NG enablers for achieving ubiquitous three-dimensional coverage. Liu et al.~\cite{7378276} demonstrate that user association in heterogeneous networks requires joint optimization of load balancing and interference management, while Yang and Hanzo~\cite{7244171} show that linear combining schemes achieve near-optimal performance in massive MIMO through favorable propagation and channel hardening. Chen et al.~\cite{chen20235g} further emphasize that the progression from 5G-Advanced to NG demands orders-of-magnitude improvements in spectral efficiency, energy efficiency, and latency through space-ground integration. Furthermore, recent studies have shown that Cell-Free Massive MIMO assisted with reconfigurable intelligent surfaces (RIS) can effectively improve both spectral and energy efficiency. In \cite{10640072}, reflection pattern modulation was integrated with RIS-assisted Cell-Free Massive MIMO, enabling additional data transmission via the data-dependent active RIS block indices and extending RISs from passive reflection to information-bearing transmission, with closed-form spectral efficiency and energy efficiency analysis, along with phase-shift optimization for green uplink communications. STAR-RIS-aided Cell-Free Massive MIMO~\cite{10841966} is studied under realistic hardware impairments (e.g., transceiver distortions, phase noise, imperfect CSI, and spatially correlated fading), with a closed-form spectral efficiency expression and joint beamforming-power control design highlighting its potential for full-space NG coverage.} ISAC \cite{yang2024coordinated} enables the same radio signals to support both data transmission and environmental sensing, facilitating high-precision localization and imaging \cite{zhang2024target}. Near-field communications \cite{liu2025near}, operating in the sub-terahertz and terahertz bands, exploit spherical {wavefront propagation} to achieve ultra-high data rates and spatially focused links. Meanwhile, artificial intelligence (AI) is envisioned to optimize network operation, resource allocation, and maintenance in real time \cite{zuo2023survey}. These innovations collectively support emerging applications such as holographic telepresence, autonomous mobility, and digital twins. Within this ecosystem, Cell-Free Massive MIMO \cite{zheng2024mobile} and {low-Earth-orbit (LEO)} satellite communications \cite{al2022survey} are key enablers of ubiquitous, high-reliability coverage. By leveraging cooperative processing {relying on} distributed APs and satellite gateways, this hybrid space-terrestrial architecture \cite{zhu2021integrated} can deliver high spectral efficiency, robust connectivity, and uniform quality of service across diverse environments.

In Cell-Free Massive MIMO, a large number of distributed access points (APs) {jointly} serve {a set of} users on the same time-frequency resources under centralized coordination, rather than dividing the coverage area into disjoint cells \cite{elhoushy2021cell,ammar2021user,ngo2024ultradense}. This architecture eliminates cell boundaries, mitigates inter-cell interference, and ensures uniform quality of service for users across different locations. By exploiting favorable propagation and channel hardening, the effects of small-scale fading and interference diminish as the number of APs increases \cite{song2021joint}. Both uplink {receiver combining (RC)} and downlink {transmit precoding (TPC)} can be efficiently realized through linear processing schemes such as maximum ratio combining (MRC) and minimum mean-square error (MMSE) detection, enabling scalable implementation {at compellingly} low computational complexity. Moreover, long-term resource allocation, such as power control and user association, can be optimized using large-scale fading statistics to achieve high spectral and energy efficiency.

\begin{table*}[t]
\label{tab:comparedwork}
\caption{Comparative summary between this work and representative studies on massive MIMO systems.}
\resizebox{\linewidth}{!}{%
\begin{tabular}{lccccccccccc|c} 
\hline\cline{1-13}
\textbf{Ref.}                                       & \cite{bashar2020uplink} & \cite{9104036} & \cite{9136914} & \cite{10133823} & \cite{hao2024joint} & \cite{li2024stacked} & \cite{10966047} & \cite{11168825} & \cite{chien2025differential} & \cite{zhu2025holographic} & \cite{li2025holographic} & \textbf{This Work}  \\ 
\hline\hline
\textbf{Year}                                       & 2020                    & 2020           & 2020           & 2023            & 2024                & 2024                 & 2025            & 2025            & 2025                           & 2025                      & 2025                     & \textbf{2026}       \\ 
\hline
\textbf{Massive MIMO}                               & \checkmark              & \checkmark     & \checkmark     & \checkmark      & \checkmark          & \checkmark           &                 & \checkmark      & \checkmark                     & \checkmark                          & \checkmark               & \checkmark          \\
\textbf{Cell-Free}                                  & \checkmark              &                & \checkmark     & \checkmark      & \checkmark          & \checkmark           & \checkmark      &                 & \checkmark                     &                           &                          & \checkmark          \\
\textbf{Satellite}                                  &                         &                &                &                 &                     &                      & \checkmark      &                 &                                &                           & \checkmark               & \checkmark          \\
\textbf{EE Metric}                                  & \checkmark              &                & \checkmark     &                 &                     &                      &                 & \checkmark      &                                &                           &                          & \checkmark          \\
\textbf{Closed-form Rate}                           & \checkmark              & \checkmark     & \checkmark     &                 &                     &                      &                 &                 &                                & \checkmark                & \checkmark               & \checkmark          \\
\textbf{User Association}                           &                         &                &                &                 & \checkmark          &                      & \checkmark      &                 & \checkmark                     &                           & \checkmark               & \checkmark          \\
\textbf{Power Control}                              & \checkmark              & \checkmark     & \checkmark     & \checkmark      & \checkmark          &            &                 & \checkmark      &                      & \checkmark                &                          & \checkmark          \\
\textbf{Evolutionary Computing}                     &                         &                &                & \checkmark      &                     &                      &                 &                 & \checkmark                     &                           &                          & \checkmark          \\
\textbf{Multiple Satellites in Hybrid Architecture} &                         &                &                &                 &                     &                      &                 &                 &                                &                           &                          & \checkmark          \\
\textbf{Joint Discrete-Continuous Opt.}             &                         &                &                &                 &                     &                      &                 &                 &                                &                           &                          & \checkmark          \\
\hline\cline{1-13}
\end{tabular}
}
\end{table*}

Terrestrial Cell-Free Massive MIMO delivers uniform coverage in dense environments but is economically unsustainable for rural or remote areas due to its dense AP requirement~\cite{interdonato2019ubiquitous,ngo2017cell,elhoushy2021cell,zheng2024mobile,Chien2021TWC}. Achieving ubiquitous coverage in NG networks therefore necessitates satellite integration~\cite{luo2024leo}. Geostationary satellites, however, are ill-suited for this role owing to their high latency, cost, and limited spectral reuse~\cite{pan2022latency}. LEO constellations have consequently emerged as the preferred solution, offering low propagation delay and flexible dynamic coverage. In rural areas, LEO satellites bridge coverage gaps where terrestrial rollout entails prohibitive cost; in dense urban scenarios, they relieve spectral congestion and sustain service continuity. This has motivated recent research into hybrid LEO–cell-free Massive MIMO architectures~\cite{van2022space}, which leverage cooperative transmission, centralized processing, and spatial diversity to deliver reliable, fairness-aware ubiquitous service.
Such integration, however, introduces interconnected physical and architectural challenges absent from purely terrestrial deployments. The excessive space-ground path loss of LEO altitudes creates a link budget asymmetry between satellite and terrestrial APs, necessitating high-gain receive arrays and coherent joint processing at the CPU to attain meaningful spatial diversity gains. This asymmetry is further compounded by the satellite-user channel's strong spatial correlation and dominant LoS components, which require specialized joint MMSE estimation frameworks capable of simultaneously handling heterogeneous terrestrial and non-terrestrial channel statistics. These channel-related issues, in turn, render conventional distance-based user association strategies inadequate, since the distinct power profiles of satellite feeder links require a joint optimization of binary association decisions and continuous power control across both infrastructure layers to maximize system-wide energy efficiency. Yet, realizing the full potential of such systems requires a joint optimization framework that simultaneously addresses heterogeneous channel estimation, link budget asymmetry, and coupled association-power control. This is a timely and significant problem, given the accelerating push toward integrated terrestrial-satellite connectivity.

Evolutionary algorithms (EAs) constitute a class of population-based optimization techniques inspired by natural evolution, capable of addressing non-convex, non-differentiable, and highly nonlinear problems prevalent in NG network design \cite{he2023review,van2024active,vanchien2026singlemultiobjectivestochasticoptimization}. Their population-driven search enables efficient handling of computationally intensive and {even} NP-hard tasks, including throughput maximization under correlated fading, clustering optimization in cloud infrastructures, and adaptive routing for coverage enhancement. Among various EAs, Differential Evolution (DE) \cite{11080325, chien2024solving} has attracted significant attention owing to its simple parameterization, strong global search capability, and reliable convergence in high-dimensional continuous optimization, making it {eminently} suitable for large-scale {NG} resource management.

To the best of our knowledge, no existing work has jointly investigated energy efficiency of hybrid satellite-terrestrial cell-free Massive MIMO networks {in the face of realistic} imperfect channel state information (CSI) and user-centric association. For instance, \cite{van2022space} investigates space-terrestrial cooperation over spatially correlated channels with imperfect CSI for uplink performance analysis and optimization, but does not address energy-efficient operation under a user-centric Cell-Free Massive MIMO architecture nor a joint discrete-continuous optimization of association and power control; \cite{9136914} studies joint power allocation and load balancing for energy-efficient Cell-Free Massive MIMO in a purely terrestrial setting without incorporating satellites or LoS-dominated Rician fading channels; and \cite{hao2024joint} addresses joint user association and power control for terrestrial Cell-Free Massive MIMO but neither integrates LEO satellites nor targets energy efficiency under hybrid space-terrestrial coverage. As further confirmed by Table~\ref{tab:comparedwork}, our work uniquely combines: $i)$ multiple LEO satellites with spatially correlated Rician fading channels alongside distributed terrestrial APs; $ii)$ a closed-form uplink ergodic throughput expression under MRC that accounts for both LoS and NLoS components; $iii)$ a joint mixed-integer optimization of binary association and continuous power control; and $iv)$ an IDE algorithm with proven convergence for the resulting NP-hard problem. By exploiting these cooperative gains, the main contributions of this work are summarized as follows:
\begin{itemize}
    \item[$i)$] We derive closed-form expressions for the ergodic {uplink} throughput {of the MRC receiver} technique, explicitly characterizing the impact of both LoS and NLoS satellite components as well as {of} terrestrial fading statistics.
    \item[$ii)$] We introduce an energy efficiency maximization problem integrating user association and transmit power allocation, ensuring balanced throughput across heterogeneous users while accounting for binary {user-association} constraints.
    \item[$iii)$] To cope with the NP-hard nature of these problems, we develop a IDE framework, capable of efficiently exploring feasible {user-}association patterns and achieving near-optimal solutions in polynomial time.
    \item[$iv)$] Our numerical results demonstrate that the proposed approach improves the total energy efficiency compared to conventional fixed-association schemes. Moreover, the benefits of satellite assistance become particularly pronounced for users {having} unfavorable channel conditions.

\end{itemize}
The rest of this paper is organized as follows: Section \ref{Sec:Sys} introduces the system model, pilot training, and channel estimation protocol. Section \ref{Sec:Uplinkdata} presents the uplink throughput analysis. In addition, Section \ref{sec:problem} formulates the energy efficiency optimization problem {considered}, while Section \ref{sec:alg} describes our proposed IDE framework to acquire the problem. Finally, Section \ref{sec:result} provides {our} numerical results, followed by concluding remarks in Section \ref{sec:Conclusion}. 


\textit{Notations:} Matrices, vectors, scalars, and sets are denoted by upper-case boldface (e.g., $\mathbf{X}$), lower-case boldface (e.g., $\mathbf{x}$), italic (e.g., $x, X$), and calligraphic (e.g., $\mathcal{X}$) letters, respectively. $\mathcal{CN}(\cdot,\cdot)$, $\mathcal{N}(\cdot,\cdot)$, $\mathcal{U}(\cdot,\cdot)$, and $\mathcal{C}(\cdot,\cdot)$ denote the circularly symmetric Gaussian, Normal, Uniform, and Cauchy distributions, respectively. $(\cdot)^T$, $(\cdot)^H$, and $(\cdot)^{-1}$ denote the transpose, Hermitian transpose, and matrix inverse, respectively. Finally, $\mathrm{tr}(\mathbf{X})$, $\mathbf{I}_N$, and $\mathbf{0}_M$ denote the trace of $\mathbf{X}$, the $N \times N$ identity matrix, and the $M \times 1$ zero vector, respectively. 

\section{Space-Terrestrial System, Pilot Training, and Channel Estimation} \label{Sec:Sys}

This section delves into space-ground communications relying on multiple serving nodes (satellites and APs), addressing the practical challenges posed by realistic CSI and limited user associations.

\begin{figure*}[t]
    \centering
    \subfloat[]{\includegraphics[width=0.45\textwidth]{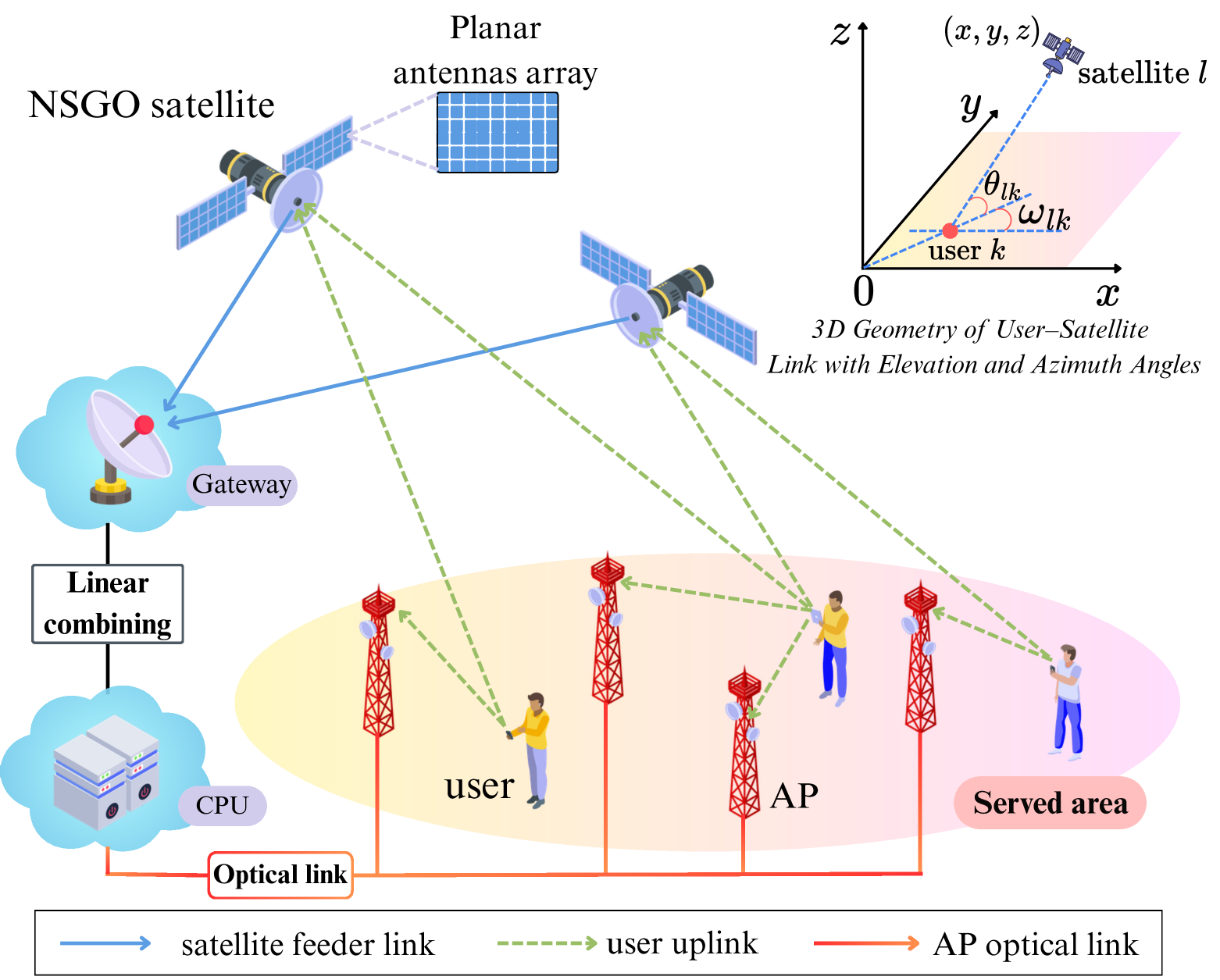}\label{fig:model}} \hspace{4mm}
    \subfloat[]{\includegraphics[width=0.42\textwidth]{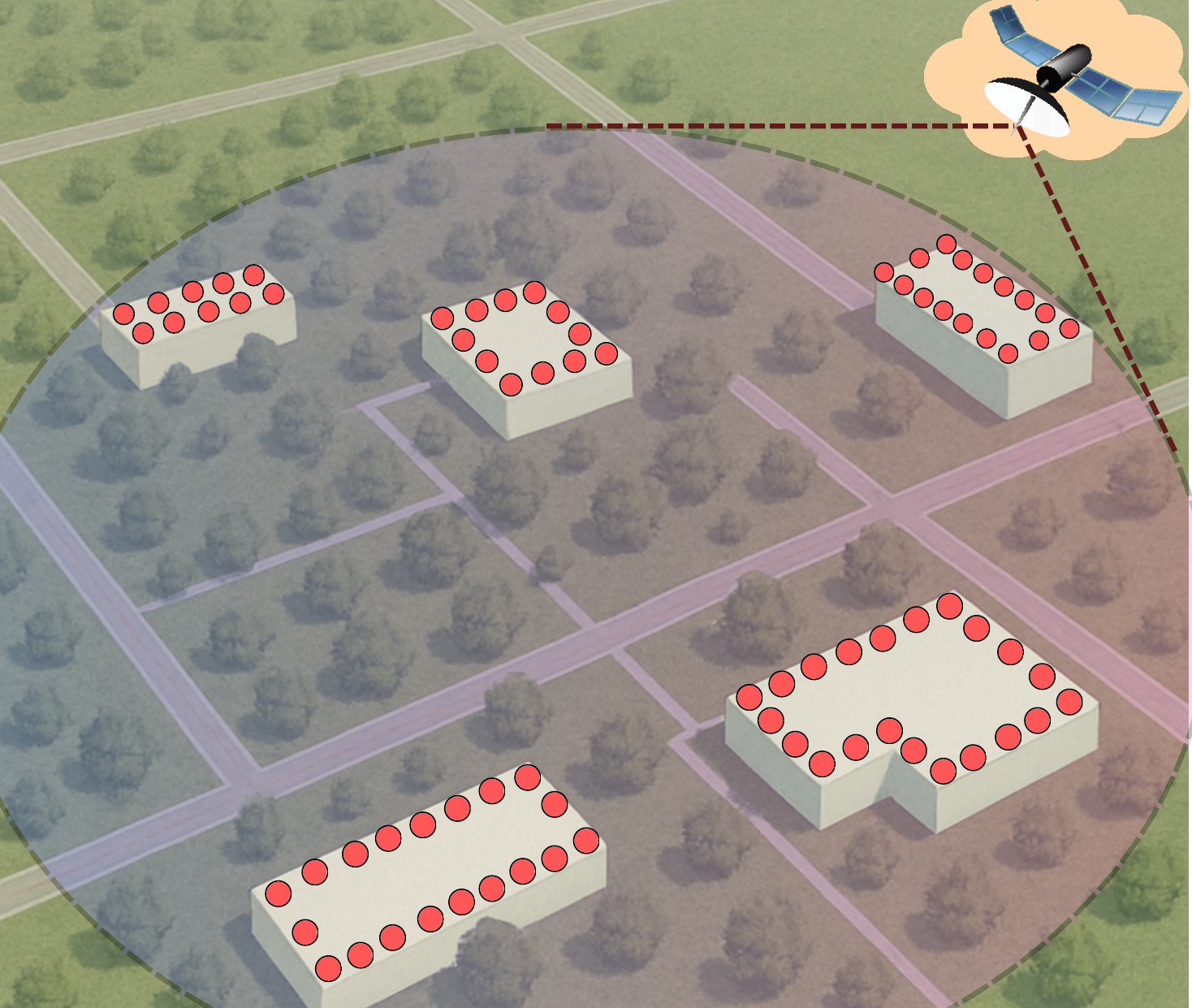}}
    \caption{The illustration of integrated Satellite-Cell-Free Massive MIMO architecture and dense AP deployment: (a) System architecture with satellite feeder links, user uplinks, AP optical links, and CPU processing; (b) An example of dense AP deployment (red dots) in a university campus.}
    \label{fig:Problem19}
\end{figure*}

\subsection{Proposed System Models}
We consider an integrated space-terrestrial network comprising $L$ LEO satellites in $\mathcal{M}^{\mathrm{SAT}}$ with $M$ antennas each, and $N$ terrestrial APs in $\mathcal{M}^{\mathrm{AP}}$ with $P$ antennas each, jointly serving $K$ single-antenna users in $\mathcal{M}^{\mathrm{user}}$. A flexible association scheme allocates each user to satellites or APs to improve load balancing and spectral efficiency. The system operates in the Massive MIMO regime having $ML + NP \gg K$, ensuring favorable propagation. As shown in Fig.~\ref{fig:model}, APs forward uplink signals to the CPU via optical fronthaul links~\cite{10546919}, while satellites use a radio feeder link through a ground station. Both links are modeled as imperfect Gaussian channels.
A quasi-static block-fading model is assumed, where channels remain constant over each coherence interval of $\tau_c$ symbols. Of these, $K$ symbols are allocated for pilot transmission and $\tau_c - K$ for uplink data~\footnote{\textcolor{black}{The coherence time is defined as $T_c(t) = \min \{\tau\mid| A_h(t, \tau)/A_h(t, 0)|< \epsilon \}$, where $A_h(t, \tau) = \mathbb{E}[h(t)h^H(t + \tau)]$ depends on both the spatial geometry and satellite dynamics~\cite{11104425}. In LEO systems, high mobility mainly introduces a deterministic Doppler shift, which can be largely removed through Doppler pre-compensation. Consequently, the residual Doppler spread is small, resulting in a sufficiently long coherence block $\tau_c \gg K$ that keeps the pre-log factor $(1 - K/\tau_c)$ in \eqref{eq:Rkv1} close to 1. Without compensation, the reduced coherence time would significantly degrade throughput. \textcolor{black}{For example, for a 300 km LEO orbit at 20 GHz, Doppler pre-compensation can by and large remove the dominant deterministic Doppler component. Assuming a representative residual Doppler frequency of approximately 100 Hz, the coherence time is $T_c \approx 0.423/f_{D,res} \approx 4.2$ ms~\cite{Rappaport_2024}, corresponding to about $42$ OFDM symbols for $\Delta f=10$ kHz. At low elevation angles, the higher absolute Doppler shift (which peaks near the horizon) amplifies the effect of ephemeris/prediction errors, while increased atmospheric scintillation and ground multipath increase the Doppler spread; both effects aggravate  the residual Doppler and thereby reduce the effective coherence time~\cite{11230625}.}}}. Then we adopt a user-centric clustering strategy where serving APs and satellites are selected based on each user's channel conditions. Following the dynamic cooperation clustering (DCC) framework of~\cite{demir2021foundations}, terrestrial or non-terrestrial operation is flexibly selected through optimized load balancing. User $k$ is served by APs in $\mathcal{M}_k^{\mathrm{AP}} \subseteq \{1,\ldots,N\}$ and satellites in $\mathcal{M}_k^{\mathrm{SAT}} \subseteq \{1,\ldots,L\}$. The associations are specified by binary variables $\alpha_{nk}^{\mathrm{AP}}, \alpha_{lk}^{\mathrm{SAT}} \in \{0,1\}$: 
\begin{equation}
\alpha^{\mathrm{AP}}_{nk} = 
\begin{cases}
1, & \text{if } n \in \mathcal{M}^{\mathrm{AP}}_k, \\
0, & \text{otherwise},
\end{cases}
 \text{ and }
\alpha^{\mathrm{SAT}}_{lk} = 
\begin{cases}
1, & \text{if } l \in \mathcal{M}^{\mathrm{sat}}_k, \\
0, & \text{otherwise}.
\end{cases}
\end{equation}
By controlling these variables, the system flexibly adapts AP and satellite assignments, enabling joint optimization of spectral efficiency, load balancing, and user-centric service.
\begin{remark}
Although uplink (UL) signals received by the satellite are weaker than those at terrestrial APs, the large antenna array at the satellite compensates through high receiver gain. Coherent combining at the CPU then leverages this contribution to enhance overall link quality, an effect that becomes more pronounced when both the satellite and ground station employ high-gain arrays, effectively mitigating the severe space-ground path loss \cite{8746876}.
\end{remark}
\begin{remark}
LEO satellites are vital for sparse rural or disaster-prone regions lacking reliable terrestrial infrastructure. In such scenarios, our optimization framework inherently adapts to satellite-only mode by driving $\alpha_{nk}^{\mathrm{AP}}$ to $0$. While the hybrid architecture maximizes energy efficiency in dense environments, this seamless fallback ensures ubiquitous connectivity and demonstrates the system's architectural flexibility.
\end{remark}
\subsection{Channel Model}\label{sec:channel model}
When AP~$n$ serves user~$k$, the associated terrestrial uplink channel vector is denoted by $\mathbf{g}_{nk} \in \mathbb{C}^P$ and modeled as:
\begin{equation}
    \mathbf{g}_{nk} \sim \mathcal{CN}(\mathbf{0}_P, \mathbf{G}_{nk}),
\end{equation}
where $\mathbf{G}_{nk} = \beta^{\mathrm{AP}}_{nk} \mathbf{I}_P$, and $\beta^{\mathrm{AP}}_{nk} \ge 0$ is the large-scale fading coefficient capturing geometric pathloss and shadow fading.  
For the relatively small number of satellites, spatial correlation is modeled by a tractable correlated Rayleigh fading model. The spatial channel \(\mathbf{h}_{lk} \in \mathbb{C}^M\) between user \(k\) and satellite \(l\) is given by \(\mathbf{h}_{lk} \sim \mathcal{CN}(\bar{\mathbf{h}}_{lk}, \mathbf{H}_{lk})\), where \(\bar{\mathbf{h}}_{lk} \in \mathbb{C}^M\) is the LoS component and \(\mathbb{E}\{\mathbf{h}_{lk}\mathbf{h}^H_{lk}\} = \mathbf{H}_{lk} \in \mathbb{C}^{M \times M}\) describes the spatial correlation. The LoS component is expressed as:
\begin{equation}\label{eq:gnk}
    \bar{\mathbf{h}}_{lk} = \sqrt{\beta^{\mathrm{SAT}}_{lk} \frac{\eta_{lk}}{(\eta_{lk} + 1)}} \left[ e^{j\pmb{\ell}(\theta_{lk}, \omega_{lk})^T\mathbf{c}_1},\ldots,e^{j\pmb{\ell}(\theta_{lk}, \omega_{lk})^T\mathbf{c}_M}  \right]^T,
\end{equation}

where $\theta_{lk}$ and $\omega_{lk}$ denote the elevation and azimuth angles, respectively; $\eta_{lk} \geq 0$ is the Rician factor characterizing the power ratio between {the} LoS and NLoS components; and $\beta^{\mathrm{SAT}}_{lk}$ represents the large-scale fading coefficient between user~$k$ and satellite~$l$, which is influenced by both the satellite's altitude and the user's geographical position. We consider a rectangular antenna array configuration (as illustrated in Fig. \ref{fig:model}), for which the wavefront direction vector $\pmb{\ell}(\theta_{lk}, \omega_{lk})$ is characterized by $\pmb{\ell}(\theta_{lk}, \omega_{lk}) = \frac{2\pi}{\lambda} \left[\cos(\theta_{lk}) \cos(\omega_{lk}), \sin(\theta_{lk}) \cos(\omega_{lk}), \sin(\theta_{lk})
\right]^T,$ where $\lambda$ represents the carrier wavelength. The vectors {$\mathbf{c}_m$ for $m=\{1,\ldots,M\}$} correspond to the $M$ indexing vectors as detailed in \cite{van2022space}. Additionally, $d_U$ and $d_V$ specify the antenna element spacing in the horizontal and vertical dimensions, respectively. Following the 3D channel modeling approach presented in \cite{ying2014kronecker} and \cite{5288965}, the spatial correlation matrix for the planar antenna array is formulated as:
\begin{equation}\label{eq:Gnk}
    \mathbf{H}_{lk} = \frac{\beta^{\mathrm{SAT}}_{lk}}{\eta_{lk} + 1} \mathbf{H}_{lk,U} \otimes \mathbf{H}_{lk,V},
\end{equation}
where $\mathbf{H}_{lk,U} \in \mathbb{C}^{M_U \times M_U}$ and $\mathbf{H}_{lk,V} \in \mathbb{C}^{M_V \times M_V}$ characterize {the} spatial correlation along the horizontal and vertical dimensions, respectively.

\subsection{Uplink Pilot Training Phase}
During the uplink pilot training phase, all $K$ users transmit their respective pilot sequences simultaneously to the network nodes. 
User~$k$ transmits its assigned pilot sequence $\boldsymbol{\phi}_k \in \mathbb{C}^{K}$, designed to be mutually orthogonal, satisfying $\pmb{\phi}_k^H \pmb{\phi}_{k'} = 1 \text{ if } k = k'$ and $\pmb{\phi}_k^H \pmb{\phi}_{k'} = 0,$ otherwise.
The pilot signal received at the AP~$n$, denoted by $\mathbf{Y}^{\text{pilot}}_{n} \in \mathbb{C}^{P \times K}$, and at the satellite~$l$, denoted by $\tilde{\mathbf{Y}}^{\text{pilot}}_{l} \in \mathbb{C}^{M \times K}$, can be respectively expressed as:
\begin{align}
&\mathbf{Y}^{\mathrm{AP},\text{ pilot}}_{n} = \sum\nolimits_{k=1}^K \alpha^{\mathrm{AP}}_{nk}\sqrt{\rho K}  \mathbf{g}_{nk} \pmb{\phi}_k^H + \mathbf{N}^{\mathrm{AP}}_{n}, \label{eq:ypm}\\
&{\mathbf{Y}}^{\mathrm{SAT},\text{ pilot}}_{l} = \sum\nolimits_{k=1}^K {\alpha}^{\mathrm{SAT}}_{lk}\sqrt{\rho K}  \mathbf{h}_{lk} \pmb{\phi}_k^H + {\mathbf{N}}^{\mathrm{SAT}}_{l}, \label{eq:Yp}
\end{align}
where $\rho \geq 0$ represents the transmit power allocated per pilot symbol. The {signals received} at AP~$n$ and satellite~$l$ are corrupted by additive white Gaussian noise (AWGN), represented by the matrices $\mathbf{N}^{\mathrm{AP}}_{n} \in \mathbb{C}^{P\times K}$ and ${\mathbf{N}}^{\mathrm{SAT}}_{l} \in \mathbb{C}^{M \times K}$, respectively. These noise components are characterized by complex Gaussian distributions with zero mean, specifically $\mathbf{N}^{\mathrm{AP}}_{n} \sim \mathcal{CN}(\mathbf{0}_P, \sigma_a^2\mathbf{I}_P)$ for the terrestrial AP~$n$ and $\mathbf{N}^{\mathrm{SAT}}_{l} \sim \mathcal{CN}(\mathbf{0}_M, \sigma_s^2\mathbf{I}_M)$ for the satellite~$l$, where the parameters $\sigma_a^2 [\mathrm{dB}]$ and $\sigma_s^2 [\mathrm{dB}]$ denote the respective noise variances at the AP and satellite reception points.

\subsection{Uplink Channel Estimation Phase}
From the received pilot signals in \eqref{eq:ypm} and \eqref{eq:Yp}, and assuming that the APs and satellites know the relevant channel statistics, MMSE estimation is applied to obtain accurate estimates of $\mathbf{g}_{nk}$ and $\mathbf{h}_{lk}$, as stated in the lemma below.
\begin{lemma} \label{lemma:ChannelEst}
When employing the MMSE estimation technique, the terrestrial channel vector $\mathbf{g}_{nk}$ can be estimated from the received pilot signal as:
\begin{equation}
\hat{\mathbf{g}}_{nk} = {\sqrt{\rho K} \beta^{\mathrm{AP}}_{nk}}{(\rho K  \beta^{\mathrm{AP}}_{nk} + \sigma_{a}^2)^{-1}}\mathbf{Y}_n^{\mathrm{AP}, \text{pilot}}\pmb{\phi}_k.
\end{equation}
The channel estimates $\hat{\mathbf{g}}_{nk}$ are independent random variables, which are distributed as $\hat{\mathbf{g}}_{nk} \sim \mathcal{CN}
\left( \mathbf{0}_P, \;
\rho K\,\mathbf{G}_{nk}\mathbf{\Psi}_{nk}\mathbf{G}_{nk} \right) 
\sim \mathcal{CN} \left(
\mathbf{0}_P, {\rho K \left(\beta^{\mathrm{AP}}_{nk}\right)^2}(\rho K\,\beta^{\mathrm{AP}}_{nk} + \sigma_a^2)^{-1} \mathbf{I}_P \right),$ and the variance is $\mathbf{\Theta}_{nk} =\mathbb{E}\{\mathbf{|\hat{g}}_{nk}|^2\}= {\rho K \left(\beta^{\mathrm{AP}}_{nk}\right)^2}(\rho K\,\beta^{\mathrm{AP}}_{nk} + \sigma_a^2)^{-1} \mathbf{I}_P,$
where $\mathbf{\Psi}_{nk} = ({\rho K}\mathbf{G}_{nk} + \sigma_a^2\mathbf{I}_P)^{-1}$ denotes the MMSE coefficient matrix. The channel estimation error, $\mathbf{e}^\mathrm{AP}_{nk} = \mathbf{g}_{nk} - \hat{\mathbf{g}}_{nk}$, {obeys} a complex Gaussian distribution $\mathbf{e}^\mathrm{AP}_{nk} \sim \mathcal{CN}(\mathbf{0}_P,\mathbf{G}_{nk}-\rho K\mathbf{G}_{nk}\mathbf{\Psi}_{nk}\mathbf{G}_{nk}) \sim \mathcal{CN}(\mathbf{0}_P,{\beta^{\mathrm{AP}}_{nk} \sigma_a^2}{(\rho K \beta^{\mathrm{AP}}_{nk} + \sigma_a^2)^{-1}}  \mathbf{I}_P) \sim \mathcal{CN}(\mathbf{0}_P,\mathbf{C}_{e,nk}^\mathrm{AP})$ with $\mathbf{C}_{e,nk}^\mathrm{AP}$ {being} the correlation matrix.

For the {LEO} channel $\mathbf{h}_{lk}$, the MMSE estimate $\hat{\mathbf{h}}_{lk}$ is: 
\begin{equation} \label{eq:hatk}
\hat{\mathbf{h}}_{lk} = \bar{\mathbf{h}}_{lk} + \sqrt{\rho K}\mathbf{H}_{lk} \pmb{\Phi}_{lk} ( {\mathbf{Y}}^{\mathrm{SAT}, \text{pilot}}_{l} \pmb{\phi}_{k} - \sqrt{\rho K} \bar{\mathbf{h}}_{lk} ),
\end{equation}
where $\pmb{\Phi}_{lk} = (\rho K \mathbf{H}_{lk} + \sigma_s^2 \mathbf{I}_M )^{-1}$. Since $\mathbf{\Omega}_{lk} = \mathbf{H}_{lk} \pmb{\Phi}_{lk}\mathbf{H}_{lk}$, the distribution of the channel estimate is $\hat{\mathbf{h}}_{lk} \sim \mathcal{CN}( \bar{\mathbf{h}}_{lk}, \rho K \mathbf{\Omega}_{lk})$. Furthermore, the estimation error $\mathbf{e}^\mathrm{SAT}_{lk} = \mathbf{h}_{lk} - \hat{\mathbf{h}}_{lk}$ is distributed as $\mathbf{e}^\mathrm{SAT}_{lk} \sim \mathcal{CN}(\mathbf{0}, \mathbf{H}_{lk} - \rho K \mathbf{\Omega}_{lk}) \sim \mathcal{CN}(\mathbf{0}, \mathbf{C}_{e,lk}^\mathrm{SAT})$ as its correlation matrix is $\mathbf{C}_{e,lk}^\mathrm{SAT}$.
\end{lemma}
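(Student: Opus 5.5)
The plan is to reduce both estimation problems to the standard linear MMSE estimation of a Gaussian vector observed in additive Gaussian noise. For that model the conditional mean is linear in the observation and coincides with the MMSE estimator.

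First, project the received pilot matrices onto the pilot of user $k$. Right-multiplying \eqref{eq:ypm} by $\pmb{\phi}_k$ and using the orthogonality $\pmb{\phi}_k^H\pmb{\phi}_{k'}=\delta_{kk'}$ gives
$$\mathbf{y}_{nk}\triangleq\mathbf{Y}^{\mathrm{AP},\text{pilot}}_n\pmb{\phi}_k=\sqrt{\rho K}\,\mathbf{g}_{nk}+\mathbf{N}^{\mathrm{AP}}_n\pmb{\phi}_k .$$
This holds when $\alpha^{\mathrm{AP}}_{nk}=1$, which is the relevant case. Since $\|\pmb{\phi}_k\|=1$, the projected noise is $\mathcal{CN}(\mathbf{0}_P,\sigma_a^2\mathbf{I}_P)$. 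It is also independent of $\mathbf{g}_{nk}$.

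The satellite observation $\mathbf{y}_{lk}=\sqrt{\rho K}\,\mathbf{h}_{lk}+\mathbf{N}^{\mathrm{SAT}}_l\pmb{\phi}_k$ is obtained the same way, with noise $\mathcal{CN}(\mathbf{0}_M,\sigma_s^2\mathbf{I}_M)$. The vectors $\mathbf{g}_{nk}$ (or $\mathbf{h}_{lk}$) and $\mathbf{y}_{nk}$ (or $\mathbf{y}_{lk}$) are therefore jointly Gaussian. The MMSE estimate is then
$$\hat{\mathbf{x}}=\mathbb{E}\{\mathbf{x}\}+\mathbf{C}_{xy}\mathbf{C}_{yy}^{-1}\bigl(\mathbf{y}-\mathbb{E}\{\mathbf{y}\}\bigr).$$

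Second, compute the required moments.

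\textbf{Terrestrial channel.} Here $\mathbb{E}\{\mathbf{g}_{nk}\}=\mathbf{0}$, $\mathbf{C}_{xy}=\sqrt{\rho K}\,\mathbf{G}_{nk}$ and $\mathbf{C}_{yy}=\rho K\mathbf{G}_{nk}+\sigma_a^2\mathbf{I}_P=\mathbf{\Psi}_{nk}^{-1}$. This gives $\hat{\mathbf{g}}_{nk}=\sqrt{\rho K}\,\mathbf{G}_{nk}\mathbf{\Psi}_{nk}\mathbf{y}_{nk}$. Substituting $\mathbf{G}_{nk}=\beta^{\mathrm{AP}}_{nk}\mathbf{I}_P$ yields the scalar form stated in the lemma.

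\textbf{Satellite channel.} The mean $\bar{\mathbf{h}}_{lk}$ gives $\mathbb{E}\{\mathbf{y}_{lk}\}=\sqrt{\rho K}\,\bar{\mathbf{h}}_{lk}$. Interpreting $\mathbf{H}_{lk}$ as the covariance of $\mathbf{h}_{lk}$ about its mean, which is consistent with \eqref{eq:Gnk}, we have $\mathbf{C}_{yy}=\pmb{\Phi}_{lk}^{-1}$. This yields \eqref{eq:hatk}.

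Third, derive the distributions.
\begin{itemize}
\item The estimate is an affine function of a Gaussian vector, so it is Gaussian with mean $\mathbb{E}\{\mathbf{x}\}$.
\item Its covariance is $\mathbf{C}_{xy}\mathbf{C}_{yy}^{-1}\mathbf{C}_{xy}^H$. This equals $\rho K\mathbf{G}_{nk}\mathbf{\Psi}_{nk}\mathbf{G}_{nk}$ and $\rho K\mathbf{H}_{lk}\pmb{\Phi}_{lk}\mathbf{H}_{lk}=\rho K\mathbf{\Omega}_{lk}$, respectively, using that both matrices are Hermitian and commute with their resolvents.
\item By the orthogonality principle for jointly Gaussian vectors, the error $\mathbf{x}-\hat{\mathbf{x}}$ is uncorrelated with, and hence independent of, $\hat{\mathbf{x}}$. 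Its covariance is therefore the prior covariance minus the estimate covariance.
\end{itemize}
For the AP this gives
$$\beta^{\mathrm{AP}}_{nk}-\frac{\rho K(\beta^{\mathrm{AP}}_{nk})^2}{\rho K\beta^{\mathrm{AP}}_{nk}+\sigma_a^2}=\frac{\beta^{\mathrm{AP}}_{nk}\sigma_a^2}{\rho K\beta^{\mathrm{AP}}_{nk}+\sigma_a^2}.$$
For the satellite it gives $\mathbf{H}_{lk}-\rho K\mathbf{\Omega}_{lk}$. The independence of the estimates $\hat{\mathbf{g}}_{nk}$ across $(n,k)$ follows because they are built from disjoint noise realizations or disjoint channel vectors. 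The channels are independent across users and APs, and for a fixed AP the projected noises $\mathbf{N}^{\mathrm{AP}}_n\pmb{\phi}_k$ for different $k$ are uncorrelated by pilot orthogonality, hence independent as Gaussians.

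The calculations themselves are routine. The main point requiring care is bookkeeping. One item is the role of $\alpha$: the estimator is meaningful only for associated pairs, so the proof assumes $\alpha_{nk}=1$. The other is the interpretation of $\mathbf{H}_{lk}$ as a covariance about the mean rather than a raw second moment. I would state this convention explicitly so that \eqref{eq:hatk} and $\mathbf{\Omega}_{lk}$ come out exactly as claimed.
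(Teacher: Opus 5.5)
Your proposal is correct and follows essentially the same route as the paper's appendix proof. You project the pilot signal onto $\pmb{\phi}_k$, apply the Gaussian conditional-mean (LMMSE) formula with prior mean $\mathbf{0}_P$ or $\bar{\mathbf{h}}_{lk}$, read off the estimate covariance $\rho K\,\mathbf{G}_{nk}\mathbf{\Psi}_{nk}\mathbf{G}_{nk}$ (resp.\ $\rho K\,\mathbf{\Omega}_{lk}$) and the error covariance as prior minus estimate covariance, and then specialize to $\mathbf{G}_{nk}=\beta^{\mathrm{AP}}_{nk}\mathbf{I}_P$.

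Several points you state explicitly are left implicit in the paper's proof. These are the assumption $\alpha^{\mathrm{AP}}_{nk}=1$, the reading of $\mathbf{H}_{lk}$ as a covariance about $\bar{\mathbf{h}}_{lk}$ (the paper also writes $\mathbb{E}\{\mathbf{h}_{lk}\mathbf{h}_{lk}^H\}=\mathbf{H}_{lk}$), and the independence of the estimates across $(n,k)$ via pilot orthogonality.
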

\begin{proof}
The proof is obtained by applying the standard MMSE estimation framework of \cite{Kay1993a}, as detailed in Appendix~\ref{Apx:ChannelEst}.
\end{proof}

\section{Uplink Data Transmission and Ergodic Throughput Analysis} \label{Sec:Uplinkdata}
\subsection{Uplink Data Transmission}
During the uplink data transmission phase, all $K$ users simultaneously transmit signals to their serving APs and/or satellites. {User} $k$ transmits a data symbol $s_k$ {at} unit power, i.e., $\mathbb{E}\{|s_k|^2\} = 1$. The signal received at AP $n$, denoted by $\mathbf{y}_n \in \mathbb{C}^P$, can be expressed as: 
\begin{equation} \label{eq:ym}
\mathbf{y}^{\mathrm{AP}}_{n} = \sum\nolimits_{k=1}^K \alpha^{\mathrm{AP}}_{nk}\sqrt{\rho_k}\mathbf{g}_{nk} s_k + \mathbf{n}^{\mathrm{AP}}_n,
\end{equation}
where $\rho_k>0$ is the transmit power allocated by user $k$ to its data symbol, and $\mathbf{n}^\mathrm{AP}_n \sim \mathcal{CN}(\mathbf{0}_P, \sigma_n^2 \mathbf{I}_P)$ is the {AWGN}. Similarly, at the gateway of satellite $l$, the received signal $\mathbf{y}_l \in \mathbb{C}^M$ is the superposition of signals from all active users: 
 \begin{equation} \label{eq:y}
{\mathbf{y}}^{\mathrm{SAT}}_l = \sum\nolimits_{k=1}^K {\alpha}^{\mathrm{SAT}}_{lk} \sqrt{\rho_k} \mathbf{h}_{lk} s_k + {\mathbf{n}}^{\mathrm{SAT}}_l,
\end{equation}
where $\mathbf{n}^\mathrm{SAT}_l \sim \mathcal{CN}(\mathbf{0}_M, \sigma_l^2 \mathbf{I}_M)$ is the additive noise during the uplink data transmission. At the CPU, the signal sent from user~$k$ is decoded by the following combination:
\begin{equation}
    \begin{aligned} \label{eq:hatsk}
\hat{s}_k = &\sum\nolimits_{n=1}^N \alpha^{\mathrm{AP}}_{nk}\hat{s}_{nk}+\sum\nolimits_{l=1}^L \alpha^{\mathrm{SAT}}_{lk}\hat{s}_{lk} \\
            = & \sum\nolimits_{n=1}^N \alpha^{\mathrm{AP}}_{nk}(\mathbf{w}^{\mathrm{AP}}_{nk})^H {\mathbf{y}^\mathrm{AP}_n} + \sum\nolimits_{l=1}^L \alpha^{\mathrm{SAT}}_{lk}({\mathbf{w}^{\mathrm{SAT}}_{lk}})^H \mathbf{y}^{\mathrm{SAT}}_{l},
\end{aligned}
\end{equation}
where $\hat{s}_{lk}$ and $\hat{s}_{nk}$ are the estimate of $s_k$ at the satellite~$l$ and the AP~$n$. {Furthermore,} $\mathbf{w}_{lk} \in \mathbb{C}^N$ and $\mathbf{w}_{nk} \in \mathbb{C}^P$ are centralized combining vectors utilized at {the} gateway of satellite~$l$ and AP~$n$, respectively. Since either $\alpha^{\mathrm{AP}}_{nk} = 0$ or $\alpha^{\mathrm{SAT}}_{lk} = 0$ implies that the corresponding {receiver combining} vectors $\alpha^{\mathrm{AP}}_{nk}(\mathbf{w}^{\mathrm{AP}}_{nk})^H$ and $\alpha^{\mathrm{SAT}}_{lk}({\mathbf{w}^{\mathrm{SAT}}_{lk}})^H$ vanish in \eqref{eq:hatsk}, the CPU effectively performs receive combining using only the data signals {gleamed} from the APs and satellites {having} $\alpha^{\mathrm{AP}}_{nk} \neq 0$ and $\alpha^{\mathrm{SAT}}_{lk} \neq 0$. Hence, \eqref{eq:hatsk} can equivalently be rewritten as:
\begin{equation}
\hat{s}_k = \sum\nolimits_{n \in \mathcal{M}^{\mathrm{AP}}_{k}}^{} \mathbf{w}^H_{nk} \mathbf{y}^{\mathrm{AP}}_n + \sum\nolimits_{l \in \mathcal{M}^{\mathrm{SAT}}_{k}} \mathbf{w}^H_{lk} \mathbf{y}^{\mathrm{SAT}}_l,
\end{equation}
where the summation is restricted to the subset of APs and satellites that actively serve user~$k$. {Albeit} this form is more compact and offers structural insights, it would lead to cumbersome notation during {our} performance analysis. Therefore, we retain the original expression in \eqref{eq:hatsk} throughout this work.
\subsection{Ergodic Throughput Analysis}
In the most advanced uplink realization of Cell-Free Massive MIMO, the operation is fully centralized. In this architecture, the APs {simply act} as remote radio heads, forwarding their received baseband signals to the CPU for joint processing \cite{demir2021foundations}.
To {detect} the data symbol $s_k$ {transmitted} from user $k$, we utilize a processing approach where {the} signals {impinging} from multiple APs and satellites are {coherently} combined. By substituting the formulations of the received signals in \eqref{eq:ym} and \eqref{eq:y} into \eqref{eq:hatsk}, {we arrive at:}
\begin{multline} \label{eq:shatk}
\hat{s}_k = \sum\limits_{k'=1}^K \sqrt{\rho_{k'}} \left(\sum\limits_{l=1}^L{\alpha}^\mathrm{SAT}_{lk'}  (\mathbf{w}^{\mathrm{SAT}}_{lk})^H \mathbf{h}_{lk'} + \sum\limits_{n=1}^N {\alpha}_{nk'}^\mathrm{AP} \mathbf{w}_{nk}^H \mathbf{g}_{nk'} \right) s_{k'} \\ + \sum\nolimits_{l=1}^L  {\alpha}_{lk}^\mathrm{SAT} (\mathbf{w}^{\mathrm{SAT}}_{lk})^H \mathbf{n}^\mathrm{SAT}_l + \sum\nolimits_{n=1}^N \alpha_{nk}^\mathrm{AP} \mathbf{w}^H_{nk} \mathbf{n}^\mathrm{AP}_n,
\end{multline}
which represents a composite signal comprising contributions from multiple transmitters {destined to} different receivers, along with additive noise. In \eqref{eq:shatk}, we observe that effective load balancing helps in suppressing {the} inter-user interference, whereas the additive noise originates collectively from all APs and satellites. To facilitate the analysis, we now define a new variable as:
\begin{equation} \label{eq:okkprime}
o_{kk'} = \sum\nolimits_{l=1}^L {\alpha}^\mathrm{SAT}_{lk'}  (\mathbf{w}^{\mathrm{SAT}}_{lk})^H \mathbf{h}_{lk'} + \sum\nolimits_{n=1}^N {\alpha}^\mathrm{AP}_{nk'} (\mathbf{w}^\mathrm{AP}_{nk} )^H\mathbf{g}_{nk'}, 
\end{equation}
which stands for the overall channel, taking the user-association into account. Furthermore, let us denote the aggregated noise as:
\begin{equation} \label{eq:ntildek}
z_k = \sum\nolimits_{l=1}^L  {\alpha}^\mathrm{SAT}_{lk} {\mathbf{w}}_{lk}^H {\mathbf{n}}_l^\mathrm{SAT} + \sum\nolimits_{n=1}^N \alpha_{nk}^\mathrm{AP} \mathbf{w}^H_{nk} \mathbf{n}_n^\mathrm{AP}.
\end{equation}
Then by substituting \eqref{eq:okkprime} and \eqref{eq:ntildek} into \eqref{eq:shatk}, the signal of user~$k$ {detected} at the CPU is {expressed} as:
\begin{equation}\label{eq:hatskv2}
    \hat{s}_k = \sum\nolimits_{k'=1}^K \sqrt{\rho_{k'}} o_{kk'}s_{k'} + z_k.
\end{equation}
Then the expression {in} \eqref{eq:hatskv2} can be reformulated as in \eqref{eq:hatskv1}, where the first part in the second equation of \eqref{eq:hatskv1} represents the desired signal {impinging} from user~$k$ with a deterministic channel gain; the second term corresponds to the self-interference caused by channel estimation errors; the third term accounts for the mutual interference from other users; and the last term represents the additive noise:
\begin{figure*}
    \begin{equation} \label{eq:hatskv1}
 \hat{s}_k =   \underbrace{\sqrt{\rho_k} \mathbb{E}\{o_{kk} \} s_{k}}_{\text{Desired signal over deterministic channel gain}}+ 
 \underbrace{\sqrt{\rho_k} ( o_{kk} - \mathbb{E}\{o_{kk} \} ) s_{k}}_{\text{Self-interference}} + \underbrace{\sum\nolimits_{\substack{k'=1 \\ k' \neq k}}^K \sqrt{\rho_{k'}} o_{kk'} s_{k'}}_{\text{Interference}} +   \underbrace{z_k}_{\text{Noise}}.
\end{equation}
\hrule
\end{figure*}
By exploiting the use-and-then-forget channel capacity bounding technique, the {ergodic uplink} throughput of user~$k$ becomes:
\begin{equation} \label{eq:Rkv1}
R_k = B\left(1 - {K}/{\tau_c}\right) \log_2 (1 + \mathrm{SINR}_k), \mbox{ [Mbps]},
\end{equation}
where $B$~[MHz] is the system bandwidth and the effective signal-to-interference-and-noise ratio (SINR) is given in \eqref{eq:SINRk}. The {ergodic uplink} throughput in \eqref{eq:Rkv1} can be applied to an arbitrary channel model and detection vectors by numerically evaluating several expectations in the numerator and denominator of \eqref{eq:SINRk} over numerous realizations of the small-scale fading coefficients.

\subsection{Linear Combining Techniques}\label{seec:combtech}
In the following, we introduce several linear {RC} schemes that can be employed for {reception} in cell-free massive MIMO systems \cite{bashar2020uplink,jiang2025most, demir2021foundations}. Let us consider two distinct transmission scenarios denoted by $\xi \in \{\mathrm{AP}, \mathrm{SAT}\}$, where $\mathrm{AP}$ and $\mathrm{SAT}$ represent the terrestrial AP and satellite links, respectively. We define $Q$ as the total number of serving nodes and $Q^{(\xi)}$ as the number of antennas deployed at each serving node under scenario~$\xi$. Accordingly, the {RC} matrix of user~$k$ for scenario~$\xi$ is expressed as:
\begin{equation}
    \mathbf{W}^{(\xi)}_{k} = \big[ \mathbf{w}^{(\xi)}_{1k}, \ldots, \mathbf{w}^{(\xi)}_{Qk} \big] \in \mathbb{C}^{Q \times Q^{(\xi)}},
\end{equation}
where $\mathbf{W}^{(\xi)}_{k}$ denotes the linear MMSE combining matrix configured for the $\xi$-link, which is defined as:
\begin{equation}\label{eq:MRC_RZF}
\mathbf{{W}}^{(\xi)}_k =
\begin{cases}
\hat{\mathbf{R}}^{(\xi)}_{k}, &\text{for MRC}, \\[5pt]
\hat{\mathbf{R}}^{(\xi)}_{k}\left(( {\hat{\mathbf{R}}^{(\xi)}_{k}})^{H}\hat{\mathbf{R}}^{(\xi)}_{k}+{(\sigma^{(\xi)})}^2 \mathbf{I}^{(\xi)}\right)^{-1},  &\text{for RZF}, \
\end{cases}
\end{equation}
in case MMSE combining, the element vector $\mathbf{w}_{qk}^{(\xi)}$ is:
\begin{equation}\label{eq:MMSE}
  \mathbf{w}_{qk}^{(\xi)} = 
\left(\eta{\sum\nolimits_{k'\in\mathcal{M}^{(\xi)}_k}\hat{\mathbf{r}}_{qk'}} \hat{\mathbf{r}}_{qk'}^{H}
    + \eta\mathbf{C}^{(\xi)}_{e,qk}+{(\sigma^{(\xi)})}^2 \mathbf{I}^{(\xi)}
\right)^{-1}
\hat{\mathbf{r}}_{qk}
\end{equation}
In the terrestrial scenario, where users are served by the APs, the corresponding {RC} matrices and system parameters are specified as follows: $\xi = \mathrm{AP}$, we have 
$\hat{\mathbf{R}}^{\mathrm{AP}}_k=\hat{\mathbf{G}}_k = [\mathbf{\widehat{g}}_{1k}, \ldots, \mathbf{\widehat{g}}_{nK}] \in \mathbb{C}^{P \times N}, \mathbf{W}_k^{(\xi)}=\mathbf{W}^\mathrm{AP}_k  \text{ and, } \mathcal{M}_k^{(\xi)}=\mathcal{M}_k^\mathrm{AP}.$ Then, the element vector is $\mathbf{w}^{\mathrm{AP}}_{qk} = \mathbf{w}^{\mathrm{AP}}_{nk}.$ Moreover, $
\sigma^{\mathrm{AP}}=\sigma_n, $ and $\mathbf{I}^{\mathrm{AP}}=\mathbf{I}_P,
$ with all $k\in \mathcal{M}^{\mathrm{user}}$. {By contrast,} in the satellite scenario, the users are connected through the space segment, {where} the corresponding {RC} matrices and parameters are defined accordingly at the satellite gateway. {For} $\xi = \mathrm{SAT}$, these are specified {as} $\hat{\mathbf{R}}^{\mathrm{SAT}}_k=\hat{\mathbf{H}}_k = [\mathbf{\widehat{h}}_{1k}, \ldots, \mathbf{\widehat{h}}_{lk}] \in \mathbb{C}^{M \times L}, \mathbf{W}_k^{(k)} =\mathbf{W}_k^\mathrm{SAT} = \{\mathbf{w}^{\mathrm{SAT}}_{1k}, \cdots, \mathbf{w}^{\mathrm{SAT}}_{Lk} \}, 
\sigma^{\mathrm{SAT}}=\sigma_l,$ and $ \mathbf{I}^{\mathrm{SAT}}=\mathbf{I}_M,
$ with all $k\in \mathcal{M}^{\mathrm{user}}$.

For {the MRC}, the total detection $\sum_{n\in \mathcal{M}^\mathrm{AP}_k} \hat{\mathbf{g}}_{nk}^H\mathbf{y}_{nk}^\mathrm{AP}$ requires $|\mathcal{M}^\mathrm{AP}_k|P$ multiplications. Hence, for all $K$ users, the {total} complexity is $\mathcal{O}\left(P\sum_{k=1}^K|\mathcal{M}^\mathrm{AP}_k|\right)$ per channel use.
For regularized zero-forcing (RZF) and MMSE combining, the complexity is dominated by matrix multiplications and inversions, as defined in \eqref{eq:MRC_RZF}-\eqref{eq:MMSE}. For RZF, three major operations are required: $(i)$ computing $\hat{\mathbf{G}}_k^H\hat{\mathbf{G}}_k$ {at a complexity order of} $\mathcal{O} (|\mathcal{M}^\mathrm{AP}_k|^3 P)$;
$(ii)$ inverting $(\hat{\mathbf{G}}_k^H\hat{\mathbf{G}}_k + \sigma_n^2\mathbf{I}_N)$, which costs $\mathcal{O}(|\mathcal{M}^\mathrm{AP}_k|^2)$; and $(iii)$ multiplying $\hat{\mathbf{G}}_k$ by the inverted matrix, requiring $\mathcal{O}(P|\mathcal{M}^\mathrm{AP}_k|^2)$ {operations}. Thus, the total RZF complexity for all users is $\mathcal{O}\left(\sum_{k=1}^K\left(|\mathcal{M}^\mathrm{AP}_k|^3 + P|\mathcal{M}^\mathrm{AP}_k|^2\right)\right)$.
For MMSE combining, computing $\mathbf{w}_{nk}^\mathrm{AP}$ involves:
$(i)$ forming $\sum_{k'=1}^K\hat{\mathbf{g}}_{nk'}\hat{\mathbf{g}}_{nk'}^H$ {at a cost of} $\mathcal{O}(KP^2)$;
$(ii)$ matrix inversion with cost $\mathcal{O}(P^3)$; and
$(iii)$ matrix-vector multiplication {with a cost of} $\mathcal{O}(P^2)$.
Repeating this for all APs serving user $k$ yields a total complexity of $\mathcal{O}\left(|\mathcal{M}^\mathrm{AP}_k|(KP^2 + P^3)\right)$. Therefore, the overall MMSE complexity across all users is $\mathcal{O}\left(\left(\sum_{k=1}^K|\mathcal{M}^\mathrm{AP}_k|\right)KP^2 + \sum_{k=1}^K|\mathcal{M}^\mathrm{AP}_k|P^3\right)$.
Similarly, for satellite-side processing, the MRC, RZF, and MMSE schemes require total complexities of $\mathcal{O}\left(M\sum_{k=1}^K|\mathcal{M}^\mathrm{SAT}_k|\right)$, $\mathcal{O}\left(\sum_{k=1}^K\left(|\mathcal{M}^\mathrm{SAT}_k|^3 + M|\mathcal{M}^\mathrm{SAT}_k|^2\right)\right)$, and $\mathcal{O}\left(\left(\sum_{k=1}^K|\mathcal{M}^\mathrm{SAT}_k|\right)KM^2 + \sum_{k=1}^K|\mathcal{M}^\mathrm{SAT}_k|M^3\right)$, respectively.

MMSE and RZF are costly for large-scale networks {having} many APs and users. Therefore in this work, we derive an exact closed-form expression for the uplink ergodic rate \eqref{eq:Rkv1} under MRC detection as shown in Theorem~\ref{Theorem:SE}, which holds for arbitrary values of $M$, $N$, $P$, $L$, and $K$. The MRC scheme also provides a scalable analytical framework that facilitates network expansion {to} a large number of APs and users. Although it is possible to develop an approximate closed-form characterization of the {ergodic uplink} rate for partial MMSE detection, such an extension would require adopting a different analytical approach.

\begin{figure*}[t]
\begin{equation} \label{eq:SINRk}
\mathrm{SINR}_k = \frac{\rho_k |\mathbb{E}\{o_{kk} \} |^2 }{\sum_{k'=1}^K \rho_{k'} \mathbb{E}\{ |o_{kk'}|^2 \} - \rho_k|\mathbb{E}\{o_{kk} \} |^2 + \sum_{l=1}^L\mathbb{E}\{\|\mathbf{w}^{\mathrm{SAT}}_{lk}\|^2\} \sigma_s^2 + \sum_{n=1}^N \mathbb{E}\{\|\mathbf{w}_{nk}^{\mathrm{AP}}\|^2\} \sigma_a^2}
\end{equation}
\vspace{-0.8cm}
\end{figure*}
\begin{theorem} \label{Theorem:SE}
If {the} MRC technique is {employed} at the gateways and APs, the ergodic {uplink} throughput of user~$k$ is evaluated by \eqref{eq:Rkv1}, where the effective SINR is {evaluated} as:
{\small
\begin{equation} \label{eq:SINRmrc}
\mathrm{SINR}_k^{\mathrm{mrc}} = \frac{\rho_k \left(\sum\limits_{l=1}^L \alpha_{lk}^{\mathrm{SAT}}\left(\|\overline{\mathbf{h}}_{lk}\|^2 + \rho K \operatorname{tr}(\pmb{\Omega}_{lk})\right) + \sum\limits_{n=1}^N \alpha_{nk}^{\mathrm{AP}}\mathrm{tr}(\mathbf{\Theta}_{nk})\right)^2}{\mathsf{MI}_k (\{\alpha_{nk}^\mathrm{AP} \}, \{ {\alpha}^\mathrm{SAT}_{lk} \}) + \mathsf{NO}_k (\{\alpha^\mathrm{AP}_{nk} \}, \{ {\alpha}^\mathrm{SAT}_{lk} \})}
\end{equation}
} 

The mutual interference, denoted as $\mathsf{MI}_k$, and the noise, denoted as $\mathsf{NO}_k$, are respectively defined as follows:
\begin{figure*}
 \begin{equation}
    \begin{split}
        &\mathsf{MI}_k (\{\alpha^\mathrm{AP}_{nk} \}, \{ {\alpha}^\mathrm{SAT}_{lk} \}) = \ \sum\nolimits_{\substack{k'=1 \\ k' \neq k}}^K   \alpha^\mathrm{SAT}_{lk'}\rho_{k'}   \Big\lvert \sum\nolimits_{l=1}^L {\alpha}^\mathrm{SAT}_{lk}  \bar{\mathbf{h}}_{lk}^H \bar{\mathbf{h}}_{lk'}\Big \lvert^2  + \sum\nolimits_{l=1}^L {\alpha}^\mathrm{SAT}_{lk} \Big( \rho K \sum\nolimits_{k' =1}^K  \alpha^\mathrm{SAT}_{lk'}\rho_{k'}   \bar{\mathbf{h}}_{lk'}^H\pmb{\Omega}_{lk} \bar{\mathbf{h}}_{lk'} \\& + \sum\nolimits_{k'=1}^K {\alpha}^\mathrm{SAT}_{lk'} \rho_{k'} \bar{\mathbf{h}}_{lk}^H \mathbf{H}_{lk'} \bar{\mathbf{h}}_{lk}  + \rho K \sum\nolimits_{k' =1 }^K  \alpha^\mathrm{SAT}_{lk'} \rho_{k'}   \mathrm{tr}(\mathbf{H}_{lk'}\pmb{\Omega}_{lk}) \Big) + \sum\nolimits_{n=1}^N {\alpha}^\mathrm{AP}_{nk} \Big( \sum\nolimits_{k'=1}^K  {\alpha}^\mathrm{AP}_{nk'} \rho_{k'} \mathrm{tr}(\mathbf{G}_{nk'} \pmb{\Theta}_{nk})) \Big) \label{eq:MIk}
    \end{split}
\end{equation}

\begin{equation}
\mathsf{NO}_k (\{\alpha^\mathrm{AP}_{nk} \}, \{ {\alpha}^\mathrm{SAT}_{lk} \}) = \sigma_s^2
\sum\nolimits_{l=1}^L {\alpha}^\mathrm{SAT}_{lk}(\|\overline{\mathbf{h}}_{lk}\|^2 + \rho K \operatorname{tr}(\pmb{\Omega}_{lk}))
+\sigma_a^2
\sum\nolimits_{n=1}^N {\alpha}^\mathrm{AP}_{nk} \operatorname{tr}(\pmb{\Theta}_{nk}).
\end{equation}
\hrule
\end{figure*}
\end{theorem}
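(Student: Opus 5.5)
We specialise the general use-and-then-forget SINR in \eqref{eq:SINRk} to MRC by setting $\mathbf{w}^{\mathrm{SAT}}_{lk}=\hat{\mathbf{h}}_{lk}$ and $\mathbf{w}^{\mathrm{AP}}_{nk}=\hat{\mathbf{g}}_{nk}$. We then evaluate, in closed form, the three groups of moments that appear there: the mean $\mathbb{E}\{o_{kk}\}$, the second moments $\mathbb{E}\{|o_{kk'}|^2\}$ for all $k'$, and the noise-weighting terms $\mathbb{E}\{\|\mathbf{w}\|^2\}$. Throughout, we rely on Lemma~\ref{lemma:ChannelEst}. First, $\hat{\mathbf{h}}_{lk}\sim\mathcal{CN}(\bar{\mathbf{h}}_{lk},\rho K\pmb{\Omega}_{lk})$ and $\hat{\mathbf{g}}_{nk}\sim\mathcal{CN}(\mathbf{0},\pmb{\Theta}_{nk})$. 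Second, each estimate is independent of its own estimation error. Third, because the pilots are orthogonal, the estimate for user $k$ is independent of the channels of every user $k'\neq k$. Finally, channels at different satellites and APs are independent of each other.

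\textbf{Desired signal and noise.} We write $\mathbf{h}_{lk}=\hat{\mathbf{h}}_{lk}+\mathbf{e}^{\mathrm{SAT}}_{lk}$, so that $\mathbb{E}\{\hat{\mathbf{h}}_{lk}^H\mathbf{h}_{lk}\}=\mathbb{E}\{\|\hat{\mathbf{h}}_{lk}\|^2\}=\|\bar{\mathbf{h}}_{lk}\|^2+\rho K\,\mathrm{tr}(\pmb{\Omega}_{lk})$. The AP counterpart is $\mathrm{tr}(\pmb{\Theta}_{nk})$. Summing these with the weights $\alpha$ gives the numerator of \eqref{eq:SINRmrc}. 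The same quantities $\mathbb{E}\{\|\hat{\mathbf{h}}_{lk}\|^2\}$ and $\mathbb{E}\{\|\hat{\mathbf{g}}_{nk}\|^2\}$, multiplied by $\sigma_s^2$ and $\sigma_a^2$, give $\mathsf{NO}_k$ directly.

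\textbf{Interference terms.} For $k'\neq k$, $o_{kk'}$ is a sum of independent per-node terms $\hat{\mathbf{h}}_{lk}^H\mathbf{h}_{lk'}$ and $\hat{\mathbf{g}}_{nk}^H\mathbf{g}_{nk'}$. We have $\mathbb{E}\{|o_{kk'}|^2\}=|\mathbb{E}\{o_{kk'}\}|^2+\sum\mathrm{Var}(\cdot)$, using independence across nodes. The AP terms are zero-mean, so their mean cross-products vanish. Each contributes $\mathbb{E}\{\hat{\mathbf{g}}_{nk}^H\mathbf{G}_{nk'}\hat{\mathbf{g}}_{nk}\}=\mathrm{tr}(\mathbf{G}_{nk'}\pmb{\Theta}_{nk})$. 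The satellite terms have mean $\bar{\mathbf{h}}_{lk}^H\bar{\mathbf{h}}_{lk'}$, and these means add coherently across satellites, which produces $|\sum_l\alpha_{lk}\bar{\mathbf{h}}_{lk}^H\bar{\mathbf{h}}_{lk'}|^2$. Their variances follow from the identity for independent Gaussian vectors $\mathbf{a}\sim\mathcal{CN}(\bar{\mathbf{a}},\mathbf{A})$, $\mathbf{b}\sim\mathcal{CN}(\bar{\mathbf{b}},\mathbf{B})$:
\begin{equation*}
\mathbb{E}\{|\mathbf{a}^H\mathbf{b}|^2\}-|\bar{\mathbf{a}}^H\bar{\mathbf{b}}|^2=\bar{\mathbf{b}}^H\mathbf{A}\bar{\mathbf{b}}+\bar{\mathbf{a}}^H\mathbf{B}\bar{\mathbf{a}}+\mathrm{tr}(\mathbf{A}\mathbf{B}).
\end{equation*}
With $\mathbf{A}=\rho K\pmb{\Omega}_{lk}$ and $\mathbf{B}=\mathbf{H}_{lk'}$, this yields the three bracketed satellite terms in $\mathsf{MI}_k$.

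\textbf{Self term and main obstacle.} For $k'=k$ we split $\mathbf{h}_{lk}=\hat{\mathbf{h}}_{lk}+\mathbf{e}_{lk}$. This requires the fourth moment $\mathbb{E}\{\|\hat{\mathbf{h}}_{lk}\|^4\}$ of a non-central Gaussian, which equals $(\mathbb{E}\|\hat{\mathbf{h}}\|^2)^2+2\rho K\bar{\mathbf{h}}^H\pmb{\Omega}\bar{\mathbf{h}}+(\rho K)^2\mathrm{tr}(\pmb{\Omega}^2)$, plus the error contribution $\mathbb{E}\{\hat{\mathbf{h}}^H\mathbf{C}_e\hat{\mathbf{h}}\}$. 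Recombining these with $\mathbf{C}_e=\mathbf{H}-\rho K\pmb{\Omega}$ must reproduce the general form above with $\mathbf{B}=\mathbf{H}_{lk}$, so that $\mathbb{E}\{|o_{kk}|^2\}-|\mathbb{E}\{o_{kk}\}|^2$ merges into the $k'=k$ entries of the sums in $\mathsf{MI}_k$. The AP side reduces analogously to $\mathrm{tr}(\mathbf{G}_{nk}\pmb{\Theta}_{nk})$.

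We expect this bookkeeping to be the main obstacle. It involves tracking which coherent LoS products survive, keeping the association weights $\alpha_{lk'}$ correctly placed, and verifying the cancellation of $|\mathbb{E}\{o_{kk}\}|^2$ in the denominator of \eqref{eq:SINRk}. Substituting all the moments into \eqref{eq:SINRk} and then \eqref{eq:Rkv1} completes the proof.
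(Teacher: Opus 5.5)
Your proposal is correct and follows essentially the same route as the paper's appendix. You compute $\mathbb{E}\{o_{kk}\}$ from $\mathbb{E}\{\|\hat{\mathbf{h}}_{lk}\|^2\}$ and $\mathrm{tr}(\pmb{\Theta}_{nk})$, expand the self term through $\mathbf{h}_{lk}=\hat{\mathbf{h}}_{lk}+\mathbf{e}^{\mathrm{SAT}}_{lk}$ using the non-central Gaussian fourth moment plus the error contribution, obtain the $k'\neq k$ interference from independence, and read off the noise terms.

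The recombination you flag as the main obstacle does go through. The $-\rho K\,\bar{\mathbf{h}}_{lk}^H\pmb{\Omega}_{lk}\bar{\mathbf{h}}_{lk}$ and $-(\rho K)^2\mathrm{tr}(\pmb{\Omega}_{lk}^2)$ terms from the error part cancel against the fourth-moment terms. What remains is exactly your variance form with $\mathbf{B}=\mathbf{H}_{lk}$, and $\mathrm{tr}(\mathbf{G}_{nk}\pmb{\Theta}_{nk})$ on the AP side. This reproduces the paper's closed form for $\mathbb{E}\{|o_{kk}|^2\}$ minus the squared mean.
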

\begin{proof}
The proof is based on computing the expectations in \eqref{eq:SINRk} {relying on} the channel estimates and estimation errors in Lemma~\ref{lemma:ChannelEst}. The detailed proof is available in Appendix~\ref{appendix:SE}.
\end{proof}

The numerator of the SINR in~\eqref{eq:SINRmrc} represents the coherent power contributions {arriving} from both {the satellites} and terrestrial links, incorporating the effects of {both} LoS and NLoS components. The coherent combining gain scales as $\mathcal{O}((LM + NP)^2)$, while the aggregate multiuser interference in~\eqref{eq:MIk} grows with $\mathcal{O}(KL^2M + KLMP + NKP)$, capturing the influence of {both} inter-satellite and terrestrial co-channel interference. The additive noise scales linearly with $\mathcal{O}(LM + NP)$, proportional to the total number of receiving antennas. Consequently, user-centric association effectively mitigates interference and noise, enhancing the overall throughput.

\section{ENERGY {EFFICIENCY} OPTIMIZATION FOR SPACE-TERRESTRIAL COMMUNICATIONS}
This section formulates an optimization framework for cooperative space-terrestrial networks centralized processing, jointly addressing user association and uplink power control under realistic channel and system constraints.

\subsection{Total Energy Efficiency Optimization}\label{sec:problem}
We aim {for maximizing} the overall energy efficiency of the integrated network {considered}, defined as the ratio between the achievable uplink sum-rate and the {system's total} power consumption. The uplink sum-rate is given by $\sum_{k=1}^{K} R_k$ in~\eqref{eq:Rkv1}, while the total power consumption $\mathcal{P}^{\text{total}}$ accounts for all active network entities, including the users, APs, satellites, and their corresponding fronthaul and feeder links.

\subsubsection{Power Consumption Model}
The total power consumption of the system is expressed as the aggregate energy expenditure of all components~\cite{feng2025uplink}:
\begin{equation}\label{eq:p_total}
    \mathcal{P}^{\text{total}} = 
    \mathcal{P}^{\text{user}} +
    \mathcal{P}^{\mathrm{AP}, \text{c}} +
    \mathcal{P}^{\mathrm{SAT}, \text{c}} +
    \mathcal{P}^{\mathrm{AP}, \text{fh}} +
    \mathcal{P}^{\mathrm{SAT}, \text{fl}}.
\end{equation}
Each term in~\eqref{eq:p_total} corresponds to a specific network component and {it} is detailed as follows: The total user-side power consumption is modeled as the sum of the radiated transmit power and the fixed circuit power:
$\mathcal{P}^{\text{user}} =
    \sum\nolimits_{k=1}^{K}
    \left(
        {\rho_k}/{\gamma} +
        \mathcal{P}_{k}^{\mathrm{UE}, \text{tc}}
    \right),$    
where $\gamma$ is the power amplifier efficiency, and $\mathcal{P}_{k}^{\mathrm{UE}, \text{tc}}$ represents the static circuit power consumed by the RF transceiver and baseband processing. The circuit power consumption at all APs is modeled as: $\mathcal{P}^{\mathrm{AP}, \text{c}} = P
    \sum\nolimits_{n=1}^{N} 
    \mathcal{P}_{n}^{\mathrm{AP}, \text{tc}},$
where $\mathcal{P}_{n}^{\mathrm{AP}, \text{tc}}$ is the power required for analog and digital front-end operations such as converters, mixers, and filters. Similarly, the satellite-side circuit power is given by: $    \mathcal{P}^{\mathrm{SAT}, \text{c}} =
    M\sum\nolimits_{l=1}^{L} 
    \mathcal{P}_{l}^{\mathrm{SAT}, \text{tc}},$
where $\mathcal{P}_{l}^{\mathrm{SAT}, \text{tc}}$ denotes the static circuit power associated with each satellite antenna element. The power consumed by the fronthaul and feeder links connecting APs and satellites to the CPU can be expressed as
\begin{align}
    \label{eq:p_fh_ap}
    \mathcal{P}^{\mathrm{AP}, \text{fh}} &=
    \sum_{n=1}^{N}
    \left(
        \mathcal{P}_{n}^{\mathrm{AP}, 0} +
        \mathcal{P}_{n}^{\mathrm{AP}, \text{bt}}
        \sum_{k=1}^{K}
        \alpha_{nk}^{\mathrm{AP}}
        B \log_2(1+\mathrm{SINR}_k)
    \right), \\[-1.5ex]
    \label{eq:p_feeder_sat}
    \mathcal{P}^{\mathrm{SAT}, \text{fl}} &=
    \sum_{l=1}^{L}
    \left(
        \mathcal{P}_{l}^{\mathrm{SAT}, 0} +
        \mathcal{P}_{l}^{\mathrm{SAT}, \text{bt}}
        \sum_{k=1}^{K}
        \alpha_{lk}^{\mathrm{SAT}}
        B \log_2(1+\mathrm{SINR}_k)
    \right),
\end{align}
where $\mathcal{P}_{n}^{\mathrm{AP}, 0}$ and $\mathcal{P}_{l}^{\mathrm{SAT}, 0}$ denote the traffic-independent power coefficients, while $\mathcal{P}_{n}^{\mathrm{AP}, \text{bt}}$ and $\mathcal{P}_{l}^{\mathrm{SAT}, \text{bt}}$ represent the traffic-dependent power terms, both proportional to the total data rate~\cite{ngo2017total,zaeem2024energy}.

\subsubsection{Problem Formulation}
The total energy efficiency maximization problem is thus formulated as:
\begin{subequations}\label{problem-1}
    \begin{alignat}{2}
        & \underset{\mathbf{x}}{\text{maximize}} && f(\mathbf{x}) =
         \frac{\sum_{k=1}^{K} R_k(\mathbf{x})}{\mathcal{P}^{\text{total}}(\mathbf{x})}, \\
        & \text{subject to} \quad &&
        \alpha^\mathrm{AP}_{nk}, \alpha^\mathrm{SAT}_{lk} \in \{0,1\}, \quad \forall n,l,k, \\
        & && 0 \leq \rho_k \leq P_{\max,k}, \quad \forall k.
    \end{alignat}
\end{subequations}
Hence, $\mathbf{x}=\{\{\alpha_{lk}^{\mathrm{SAT}}\},\{\alpha_{nk}^{\mathrm{AP}}\},\{\rho_k\}\}$ represents a candidate solution encoding a complete network configuration, where $\alpha_{nk}^{\mathrm{AP}}$ and $\alpha_{lk}^{\mathrm{SAT}}$ denote the AP-user and satellite-user association indicators, respectively, while $P_{\max,k}$ is the maximum transmit power of user $k$. This mixed-integer nonlinear program is inherently nonconvex due to the coupling between binary association variables and continuous power control, making global optimization intractable for large systems and motivating low-complexity, near-optimal solutions.

\begin{remark}
In energy-efficiency optimization, the sum rate is often used for analytical tractability, but it can severely compromise fairness by favoring users with strong channels. To mitigate this issue, fairness-oriented utilities such as max-min fairness and geometric mean rate have been proposed to ensure more balanced performance under heterogeneous channel conditions~\cite{chen2024enhancing,zhu2023max,van2024performance}. More generally, a comprehensive multi-objective formulation that jointly accounts for conflicting metrics such as SINR, energy consumption, load balancing, latency, and fairness is left for future work~\cite{8957702}.
\end{remark}

\subsection{Low-Complexity Metaheuristic Algorithm}
\label{sec:alg}
We introduce an adaptation of the IDE algorithm to address problem~\eqref{problem-1}, where each individual in the population is represented by a combined binary-real encoding scheme.
\subsubsection{Individual {Representation} and Population {Initialization}}\label{sec:pop}
In the proposed IDE algorithm, the set of candidate solutions, denoted by $\mathcal{Q}$, comprises ${Q}$ individuals. Each individual $i \in \{1, \ldots, Q\}$ represents a feasible solution vector of length $K(L + N + 1)$. Formally, the $i$-th solution is expressed as:
\begin{equation}\label{eq:x_i}
    \mathbf{x}_{i} = \{x_{i,1}, x_{i,2}, \ldots, x_{i,K(L+N+1)}\},
\end{equation}
where $\mathbf{x}_{i}$ is partitioned into three segments corresponding to distinct variable groups involved in the optimization problem. In more detail, it can be shown that
\begin{itemize}
    \item The first segment of length $KL$ consists of binary elements $\{0,1\}$, represent the associations between user~$k$ and the $L$ satellites, i.e., $\alpha_{lk}^{\mathrm{SAT}} = x_{i,(k-1)L+l}$.
    \item The second segment of length $KN$ also consists of binary elements $\{0,1\}$, indicate the connections between user~$k$ and the $N$ APs, i.e., $\alpha_{nk}^{\mathrm{AP}} = x_{i,KL+(k-1)N+n}$.
    \item The final segment of length $K$ contains real values in $[0,1]$, representing the normalized transmit power of each user, given by $x_{i,K(L+N)+k} = \rho_k / P_{\max,k}$.
\end{itemize}
During initialization, ${Q}-1$ individuals are randomly generated with chromosome elements uniformly sampled from $[0,1]$, as the IDE algorithm operates on real-valued representations. Each individual’s fitness is evaluated using the objective function in~\eqref{problem-1}, where higher fitness values indicate stronger and more favorable solutions~\footnote{\textcolor{black}{During fitness evaluation, the continuous variables for association indicators are rounded to their binary values before computing the objective function.}}.

\subsubsection{Mutation operator}\label{sec:mutation}
The mutation operator produces a child individual $\mathbf{x}_c$ from one or more parents. To balance exploration and convergence, we adopt {a pair of} mutation strategies and {choose} between them with {certain} probabilities
\begin{itemize}
    \item The DE/$p$best/1 strategy, with probability $\lambda$:
    \begin{equation}\label{eq:pbest-1}
        \mathbf{x}_c = \mathbf{x}_{p\text{best}} + \textsf{F}(\mathbf{x}_{r_1} - \mathbf{x}_{r_2}).
    \end{equation}
    \item The DE/current-to-$p$best/1 strategy, with probability $1 - \lambda$:
    \begin{equation}\label{eq:current-to-pbest-1}
        \mathbf{x}_c = \mathbf{x}_p + \textsf{F}(\mathbf{x}_{p\text{best}} - \mathbf{x}_p) + \textsf{F}(\mathbf{x}_{r_1} - \mathbf{x}_{r_2}),
    \end{equation}
\end{itemize}
Here, $\mathbf{x}_p$ denotes the parent solution used as the mutation base, while $\mathbf{x}_{p\text{best}}$ is randomly drawn from the top $p\%$ of the population ranked by fitness. The scaling factor is denoted by $\textsf{F}$. The vectors $\mathbf{x}_{r_1}$ and $\mathbf{x}_{r_2}$ are randomly selected, with $\mathbf{x}_{r_1}$ taken from the current population and $\mathbf{x}_{r_2}$ from either the current or archived populations, where the archive stores previously successful parents to preserve genetic diversity.

At initialization, and after every $T$ generations, the probability $\lambda$ is adaptively updated according to the relative performance of the {pair of} mutation operators:
\begin{equation}\label{eq:lambda}
    \lambda =
    \begin{cases}
        0.8, & \text{if } {\Delta_1}/{\textsf{CE}_1} > {\Delta_2}/{\textsf{CE}_2},\\[3pt]
        0.2, & \text{otherwise,}
    \end{cases}
\end{equation}
where $\Delta_1$ and $\Delta_2$ denote the cumulative fitness gains achieved by the {DE/$p$best/1} and {DE/current-to-$p$best/1} strategies over the {most recent} $T$ generations, respectively. Similarly, $\textsf{CE}_1$ and $\textsf{CE}_2$ represent the number of times each operator has been invoked to produce mutant solutions. This adaptive mechanism introduces self-adjusting behavior into the mutation process, biasing the search toward the more effective operator.

\subsubsection{Crossover operator}\label{sec:crossover}
After mutation, an offspring vector $\mathbf{x}_o = \{x_{o1}, x_{o2}, \ldots, x_{oJ}\}$ is produced by combining the parent $\mathbf{x}_p$ and the mutant vector $\mathbf{x}_c$ through a binomial crossover, where $x_{oj},j=1,\cdots,J,$ are:
\begin{equation}\label{eq:crossover}
x_{oj} =
\begin{cases}
x_{cj}, & \text{if } \mathcal{U}(0,1) < \textsf{CR} \text{ or } j = j_{\text{rand}}, \\
x_{pj}, & \text{otherwise},
\end{cases}
\end{equation}
where $\mathcal{U}(\cdot)$ denotes a uniform random variable and $\textsf{CR}$ is the crossover rate. To maintain {population} diversity, {having} at least one randomly selected index $j_{\text{rand}}$ ensures $\mathbf{x}_o \neq \mathbf{x}_p$. Additionally, the offspring $\mathbf{x}_o$ replaces the parent $\mathbf{x}_p$ if it yields a higher fitness value.

\subsubsection{Parameter adaptation}
In standard DE, the scaling factor $\textsf{F}$ and crossover rate $\textsf{CR}$ are typically fixed. To enhance adaptability, we adopt the SHADE mechanism \cite{tanabe2013success}, which adaptively updates these parameters. In each generation, $\textsf{F}_i$ and $\textsf{CR}_i$ for the $i$-th mutation strategy are sampled from historical means $\mu_F$ and $\mu_{CR}$, updated via the archives $\mathcal{A}_F$ and $\mathcal{A}_{CR}$ as follows:
\begin{equation}\label{eq:gen-f-cr-1}
    \mathsf{F}_i  \sim \mathcal{C}(\mu_{F}, \upsilon),\text{ and }
    \mathsf{CR}_i \sim \mathcal{N}(\mu_{CR}, \upsilon),
\end{equation}
where $\mathcal{C}(\mu_F,\upsilon)$ and $\mathcal{N}(\mu_{CR},\upsilon)$ denote the Cauchy and normal distributions with means $\mu_F$ and $\mu_{CR}$ and scale $\upsilon$. The parameter $\mathsf{F}_i$ is resampled until nonnegative and clipped to $[0,1]$, while $\mathsf{CR}_i$ is also confined to $[0,1]$. After each generation, the $\mathsf{F}_i$ and $\mathsf{CR}_i$ values leading to fitness improvements are stored in $\mathcal{A}_F$ and $\mathcal{A}_{CR}$, respectively, and used to update $\mu_F$ and $\mu{CR}$ as follows:
\begin{align}
    \label{eq:update-f-cr-1}
    \mu_F = \frac{\sum_{\mathsf{F}_j \in \mathcal{A}_F} w_j \mathsf{F}_j^2}{\sum_{\mathsf{F}_j \in \mathcal{A}_F} w_j \mathsf{F}_j},\text{ and } 
    \mu_{CR} = \frac{\sum_{\mathsf{CR}_j \in \mathcal{A}_{CR}} w_j \mathsf{CR}_j^2}{\sum_{\mathsf{CR}_j \in \mathcal{A}_{CR}} w_j \mathsf{CR}_j} ,
\end{align}
where $\mathbf{w} = [w_1, \ldots, w_j]$ denotes the weights. For each successful trial, the offspring’s fitness gain guides $\mu_F$ and $\mu_{CR}$ toward more effective $\mathsf{F}$ and $\mathsf{CR}$ values, with updates applied per mutation operator.

\subsubsection{Convergence analysis}
We analyze the convergence of the proposed IDE algorithm toward an optimum within an $\varepsilon$-optimal solution space.

\begin{definition}
The space of the $\varepsilon$-optimal solutions $\mathcal{S}_{\varepsilon}^*$ is defined as: $\mathcal{S}_{\varepsilon}^* = 
    \{\, x \in \mathcal{S} \mid |f(x) - f(x^*)| \leq \varepsilon \,\},$
where $f(\cdot)$ denotes the objective function (e.g., energy efficiency), 
$\mathcal{S}$ is the feasible solution space, $x^*$ is the {globally} optimal solution, and $\varepsilon$ is a small positive tolerance. 
Thus, $\mathcal{S}_{\varepsilon}^*$ represents the set of all feasible solutions whose objective values differ from the optimal by no more than $\varepsilon$.
\end{definition}

\begin{theorem}
\label{theorem:convergence}
Given a population $\mathcal{Q}$ of $Q$ initial individuals, the probability that Algorithm~\ref{alg:ide} produces at least one $\varepsilon$-optimal solution is lower-bounded by:
\begin{equation}
    \label{eq:convg_lowerbound}
    \Pr(\mathcal{Q} \cap \mathcal{S}_\varepsilon^* \neq \emptyset)
    \ge 1 - (1 - \psi(\mathcal{S}_\varepsilon^*) P_{ep})^{Q},
\end{equation}
where $\psi(\mathcal{S}_\varepsilon^*)$ denotes the measure of the $\varepsilon$-optimal solution space $\mathcal{S}_\varepsilon^*$,  and $P_{ep} \in [0, 1]$ represents the mutation probability of each individual.
\end{theorem}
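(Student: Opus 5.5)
We model each individual as an independent trial and bound the probability that it lands in $\mathcal{S}_\varepsilon^*$ after one application of the variation operators. Normalize the search space $\mathcal{S}$, after the rounding of the binary segments described in the footnote of Section~\ref{sec:pop}, so that $\psi$ is a probability measure on $\mathcal{S}$. For instance, take the product of counting measure on $\{0,1\}^{K(L+N)}$, divided by $2^{K(L+N)}$, with Lebesgue measure on $[0,1]^K$. The event of interest is the complement of ``no individual is $\varepsilon$-optimal''. So it suffices to show that each individual independently fails to be $\varepsilon$-optimal with probability at most $1-\psi(\mathcal{S}_\varepsilon^*)P_{ep}$.

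\textbf{Step 1: a single individual.} Fix individual $i$. With probability $P_{ep}$ it undergoes mutation and crossover via \eqref{eq:pbest-1}--\eqref{eq:crossover}. Conditioned on that, we argue that the resulting offspring (after clipping to $[0,1]$ and rounding) is distributed on $\mathcal{S}$ with a law that dominates $\psi$. Formally, we need $\Pr(\mathbf{x}_o\in A \mid \text{mutated})\ge \psi(A)$ for measurable $A$, and in particular for $A=\mathcal{S}_\varepsilon^*$. The reason this is plausible is threefold: the randomly drawn parents $\mathbf{x}_{r_1},\mathbf{x}_{r_2}$ are uniform initial samples, $\textsf{F}$ is Cauchy-distributed and therefore has full support, and the binomial crossover can select any coordinate pattern. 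Combining the two stages gives $\Pr(\mathbf{x}_i\in\mathcal{S}_\varepsilon^*)\ge \psi(\mathcal{S}_\varepsilon^*)P_{ep}$. Elitist selection (the offspring replaces the parent only if it is fitter) guarantees that once $\mathcal{S}_\varepsilon^*$ is hit, the population keeps an $\varepsilon$-optimal member. Here we use that the objective $f$ in \eqref{problem-1} is continuous in $\{\rho_k\}$ for fixed associations, so $\mathcal{S}_\varepsilon^*$ has positive measure whenever $\varepsilon>0$.

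\textbf{Step 2: independence across the population.} The $Q-1$ random individuals are drawn i.i.d.\ uniformly, and the variation randomness ($\textsf{F}_i$, $\textsf{CR}_i$, $j_{\text{rand}}$, the choice of strategy via $\lambda$) is drawn independently per individual. Treating the events $\{\mathbf{x}_i\notin\mathcal{S}_\varepsilon^*\}$ as independent, or at least as having conditional probabilities bounded by $1-\psi(\mathcal{S}_\varepsilon^*)P_{ep}$ given the others, we chain conditional probabilities:
\begin{equation*}
\Pr(\mathcal{Q}\cap\mathcal{S}_\varepsilon^*=\emptyset)\le \prod_{i=1}^{Q}\bigl(1-\psi(\mathcal{S}_\varepsilon^*)P_{ep}\bigr)=\bigl(1-\psi(\mathcal{S}_\varepsilon^*)P_{ep}\bigr)^{Q}.
\end{equation*}
Taking complements yields \eqref{eq:convg_lowerbound}. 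The same argument also gives asymptotic convergence as $Q$ or the number of generations grows.

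\textbf{Main obstacle.} The delicate part is Step 1 together with the independence claim. DE offspring depend on shared population members ($\mathbf{x}_{p\text{best}}$, the archive), so the per-individual events are not truly independent. Moreover, the offspring law need not dominate $\psi$ exactly: a difference vector scaled by $\textsf{F}$ does not produce a uniform distribution. To handle this, we would first state the bound conditionally: given the current population, each offspring's independent randomness (Cauchy $\textsf{F}$ with full support, Gaussian $\textsf{CR}$, the random index $j_{\text{rand}}$) lands in $\mathcal{S}_\varepsilon^*$ with probability at least $\psi(\mathcal{S}_\varepsilon^*)P_{ep}$. This amounts to defining $P_{ep}$, or absorbing into it, the minimal density ratio of the offspring kernel with respect to $\psi$, which is positive on a compact space. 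Then the product bound follows from conditional independence of the per-individual variation draws.
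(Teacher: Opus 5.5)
Your skeleton is the paper's: a per-individual probability $\psi(\mathcal{S}_\varepsilon^*)P_{ep}$ of landing in $\mathcal{S}_\varepsilon^*$, a product over the $Q$ individuals, and a complement. The difference lies in how the per-individual bound is obtained. The paper does not derive it from the DE operators. It simply posits that mutation and crossover produce a child with density $pr(\mathbf{x}_c)=1$ on the normalized $\mathcal{S}$, whence $\Pr(\mathbf{x}_c\in\mathcal{S}_\varepsilon^*)=\psi(\mathcal{S}_\varepsilon^*)$. Your Step~1 condition $\Pr(\mathbf{x}_o\in A\mid\text{mutated})\ge\psi(A)$ for all measurable $A$ is in fact the same premise. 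Both sides are probability measures on $\mathcal{S}$, so applying the condition to $A^c$ forces equality, i.e., an exactly uniform offspring.

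The genuine gap is your attempt to justify this premise from Algorithm~\ref{alg:ide}, and the repair in your last paragraph also fails.
\begin{itemize}
\item Conditional on the current population and archive, the mutant is affine in the single scalar $\mathsf{F}_i$. That scalar is resampled until nonnegative and clipped to $[0,1]$, so it does not have full support, and it is one-dimensional in any case.
\item There are finitely many choices of $\mathbf{x}_{p\text{best}},\mathbf{x}_{r_1},\mathbf{x}_{r_2}$, the strategy, and the crossover mask. The offspring is therefore supported on finitely many line segments, and its law is singular with respect to $\psi$ on the power block whenever $K\ge 2$.
\item Hence the ``minimal density ratio'' is $0$, not positive, and compactness does not change this.
\end{itemize}
Concretely, suppose the population and archive lie in an $\ell_\infty$-ball of radius $r$ about $\mathbf{c}$. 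Because $\mathsf{F}_i\in[0,1]$ and binomial crossover only mixes coordinates, every offspring lies in the ball of radius $3r$ about $\mathbf{c}$. If no point of that ball decodes into $\mathcal{S}_\varepsilon^*$, as in a stagnated population, the conditional hit probability is $0$. Your conditional chaining in Step~2 therefore cannot produce the factor $1-\psi(\mathcal{S}_\varepsilon^*)P_{ep}$. Your remark that $\mathbf{x}_{r_1},\mathbf{x}_{r_2}$ are uniform holds at most in the first generation.

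What is missing is an operator whose offspring density is bounded below on $\mathcal{S}$, e.g., a uniform re-sampling of each individual executed with probability $P_{ep}$. That is the kind of mechanism the sufficient conditions in \cite{hu2013sufficient} presuppose. If you state it as an explicit hypothesis, as the paper implicitly does, your conditional-independence and elitism arguments go through unchanged. Without it, the bound does not follow from the DE/$p$best/1 and current-to-$p$best/1 operators.
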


\begin{proof}
    For an individual $\mathbf{x}_p$, the mutation and crossover operations in Algorithm~\ref{alg:ide} generate a child $\mathbf{x}_c$ with the following probability density function: $pr(\mathbf{x}_c)=1$ if $\mathbf{x}_c \in \mathcal{S}$, and $0$ otherwise. The probability that the offspring $\mathbf{x}_c$ lies within the $\varepsilon$-optimal region is therefore:
       $ \Pr(\mathbf{x}_c \in S_\varepsilon^*) 
        = \int_{S_\varepsilon^*} pr(\mathbf{x}_c)\, d\mathbf{x}_c 
        = \psi(S_\varepsilon^*),$
    where $d\mathbf{x}_c$ represents the differential elements of an IDE chromosome, including connectivity indicators and power-allocation coefficients.
Therefore, after the mutation and crossover operations, the probability that the population does not contain any $\varepsilon$-optimal solution is given by: $\Pr(\mathcal{Q} \cap \mathcal{S}_\varepsilon^* = \emptyset)
        \le (1 - \Pr(\mathbf{x}_c \in \mathcal{S}_\varepsilon^*) P_{ep})^{Q} 
        = (1 - \psi(\mathcal{S}_\varepsilon^*) P_{ep})^{Q}.$
Consequently, the probability that at least one offspring lies in $S_\varepsilon^*$  satisfies ~\eqref{eq:convg_lowerbound}: $\Pr(\mathcal{Q} \cap \mathcal{S}_\varepsilon^* \neq \emptyset)
        = 1 - \Pr(\mathcal{Q} \cap \mathcal{S}_\varepsilon^* = \emptyset) 
        \ge 1 - (1 - \psi(\mathcal{S}_\varepsilon^*) P_{ep})^{Q}.$
\end{proof}
\begin{algorithm}[t]
\caption{Improved Differential Evolution Algorithm}\label{alg:ide}
\begin{algorithmic}[1]
    \Require Large-scale fading coefficient, bandwidth, number of subcarriers in an (OFDM) symbol, carrier frequency, objective function.

\Statex\myrule

    \Statex \texttt{// Initialization}
    \State Initialize $\mathcal{Q}^1$ as in [\ref{sec:pop}].
    \State Set parameters: $G,Q,\lambda, \mathcal{A}_{F} , \mathcal{A}_{CR}.$
    \State Set parameters: $\Delta_1, \Delta_2, \text{CE}_1, \text{CE}_2 \gets 0$.

    \Statex \texttt{// Main evolutionary loop}
    \For{$g \in \{ 1, \dots, G \}$}
        \State Initialize next generation $\mathcal{Q}'^g \gets \varnothing$.
        \For{$i \in \{ 1, \dots, Q \}$}
            \State Generate $\mathsf{F}_i$ and $\mathsf{CR}_i$ using \eqref{eq:gen-f-cr-1}
            \State Create mutant $\mathbf{x}_{ci}$ using [\ref{sec:mutation}].
            \State Create trial vector $\mathbf{x}_{oi}$ using [\ref{sec:crossover}].
            
            \If{$f(\mathbf{x}_{i}) < f(\mathbf{x}_{oi})$} 
            
            \Comment{Success: Offspring is better}
            
                \State $\mathcal{Q}'^g \gets \mathcal{Q}'^g \cup \{ \mathbf{x}_{oi} \}$.
                \State Update for operation $j$: $\Delta_j \gets \Delta_j + (f(\mathbf{x}_{oi}) - f(\mathbf{x}_i))$, $\text{CE}_j \gets \text{CE}_j + 1$.
            \Else \Comment{Failure: Parent survives}
                \State $\mathcal{Q}'^{g} \gets \mathcal{Q}'^{g} \cup \{ \mathbf{x}_{i} \}$
            \EndIf
        \EndFor

        \texttt{// Parameter adaptation}
        \State Update $\mathcal{A}_F, \mathcal{A}_{CR}$ using \eqref{eq:update-f-cr-1}

        \Statex \texttt{// Operator adaptation}
        \If{$\text{mod}(g, T) = 0$}
            \State Recalculate $\lambda$ by \eqref{eq:lambda}. Reset $\Delta_1, \Delta_2,\text{CE}_1, \text{CE}_2 \gets 0$
        \EndIf 

        \State $\mathcal{Q}^{g} \gets \mathcal{Q}'^{g}$ \Comment{Move to the next generation}
    \EndFor
    
    \Return Best sub-optimal solution $\mathbf{x}_{best}$
\end{algorithmic}
\end{algorithm}

\begin{corollary}\label{corollary:probability}
    The IDE algorithm converges in probability to the $\varepsilon$-optimal solution set $\mathcal{S}_\varepsilon^*$, that is:
    \begin{equation}
    \label{eq:convergence}
        \lim_{g \to \infty} 
        \Pr(\mathcal{Q}^g \cap \mathcal{S}_\varepsilon^* \neq \emptyset) = 1,
    \end{equation}
    where $\mathcal{Q}^g$ denotes the population at the $g$-th generation.
\end{corollary}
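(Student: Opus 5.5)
The plan is to apply Theorem~\ref{theorem:convergence} generation by generation, and to use the elitist replacement rule of Algorithm~\ref{alg:ide} to show that membership in $\mathcal{S}_\varepsilon^*$ is never lost once achieved. Write $E_g$ for the event $\mathcal{Q}^g \cap \mathcal{S}_\varepsilon^* \neq \emptyset$ and $q := \psi(\mathcal{S}_\varepsilon^*) P_{ep}$.

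\textbf{Step 1: monotonicity.} In Algorithm~\ref{alg:ide} an offspring $\mathbf{x}_{oi}$ replaces $\mathbf{x}_i$ only if $f(\mathbf{x}_{oi}) > f(\mathbf{x}_i)$. Hence the best fitness in the population is nondecreasing in $g$. Suppose some $\mathbf{x} \in \mathcal{Q}^g$ satisfies $|f(\mathbf{x}) - f(\mathbf{x}^*)| \le \varepsilon$. Since $f(\mathbf{x}) \le f(\mathbf{x}^*)$ by optimality of $\mathbf{x}^*$, the condition reads $f(\mathbf{x}) \ge f(\mathbf{x}^*) - \varepsilon$. Any replacement has larger fitness, still at most $f(\mathbf{x}^*)$, so it also lies in $\mathcal{S}_\varepsilon^*$. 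Therefore $E_g \subseteq E_{g+1}$.

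\textbf{Step 2: a per-generation bound.} Condition on $E_g^c$. In generation $g+1$, each of the $Q$ individuals produces one trial vector. By the argument in the proof of Theorem~\ref{theorem:convergence}, each trial vector lands in $\mathcal{S}_\varepsilon^*$ with probability at least $q$, independently of the current population. A trial vector that lands in $\mathcal{S}_\varepsilon^*$ while its parent does not has strictly larger fitness than the parent, so it is accepted. This gives
\[
\Pr(E_{g+1}^c \mid E_g^c) \le (1-q)^Q .
\]

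\textbf{Step 3: iterate and take the limit.} Combining Step 1 with Step 2 and a Markov/chain-rule argument yields
\[
\Pr(E_g^c) \le (1-q)^{Qg} .
\]
If $q > 0$, the right-hand side tends to $0$, so $\lim_{g\to\infty} \Pr(E_g) = 1$. The monotonicity from Step 1 even gives convergence almost surely, which implies convergence in probability.

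\textbf{Main obstacle.} The delicate point is justifying $q > 0$ uniformly over generations.
\begin{itemize}
\item This needs $\psi(\mathcal{S}_\varepsilon^*) > 0$. I would argue this from the mixed structure of the search space. For the optimal association pattern, $f$ is continuous in $\{\rho_k\}$, because the SINR in \eqref{eq:SINRmrc} and the power model \eqref{eq:p_total} are continuous. Hence an $\varepsilon$-neighborhood of the optimal power vector has positive Lebesgue measure. The rounding of association genes then maps a positive-measure cube of real-valued chromosomes onto that pattern.
\item It also needs $P_{ep}$ bounded away from zero. The Cauchy-sampled $\mathsf{F}_i$ and normal-sampled $\mathsf{CR}_i$ have full support on $[0,1]$, so I would assume a fixed lower bound on $P_{ep}$, as implicitly done in Theorem~\ref{theorem:convergence}, rather than prove it.
\end{itemize}
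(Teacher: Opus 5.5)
Your proof is correct, under the same modeling assumption the paper uses for Theorem~\ref{theorem:convergence}, but it is organized differently. The paper treats the bound of Theorem~\ref{theorem:convergence} as a generation-independent positive lower bound and then cites Theorem~2 and Corollary~3 of \cite{hu2013sufficient} as a black box. You instead prove the needed special case directly:
greedy one-to-one selection gives $E_g \subseteq E_{g+1}$; the hitting probability is stated as a conditional bound $\Pr(E_{g+1}^c \mid E_g^c) \le (1-q)^Q$, uniform over states; and the chain rule yields the explicit rate $\Pr(E_g^c) \le (1-q)^{Qg}$, even giving almost-sure absorption.

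What your route buys is transparency about the hypotheses. The unconditional bound $\Pr(E_g) \ge 1-(1-q)^Q$ is literally what Theorem~\ref{theorem:convergence} and the paper's displayed inequality provide. On its own it would not force $\Pr(E_g)\to 1$. What does the work is elitism together with the state-independent conditional bound, and these are exactly the conditions the cited result needs. You also give a reason for $\psi(\mathcal{S}_\varepsilon^*)>0$: continuity of $f$ in $\{\rho_k\}$ for the optimal association pattern, plus positive-measure rounding cells for the binary genes. The paper needs this for strict positivity but never argues it.

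What the paper's citation buys is brevity and generality. The cited sufficient condition is phrased via an infinite product, so it also covers generation-dependent bounds $q_g$ with $\prod_g (1-q_g) = 0$, whereas your argument uses a constant $q$.

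Neither proof is stronger on the remaining weak point. Both take from the proof of Theorem~\ref{theorem:convergence} that every trial vector has a density bounded below on $\mathcal{S}$, independently of the current population. Your Step~1 also implicitly needs offspring to remain feasible. Neither property follows from the actual DE operators, and you are right to flag the bound on $P_{ep}$ as an assumption rather than a derived fact.
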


\begin{proof}
    As shown in Theorem~\ref{theorem:convergence}, for every generation $g = 1,2,\dots$, there exists a constant lower bound for the probability that at least one individual lies in the $\varepsilon$-optimal set:
    \begin{equation}
        \Pr(\mathcal{Q}^g \cap \mathcal{S}_\varepsilon^* \neq \emptyset)
        \ge 1 - (1 - \psi(\mathcal{S}_\varepsilon^\ast) P_{ep})^Q > 0.
    \end{equation}
    According to Theorem~2 and Corollary~3 in \cite{hu2013sufficient}, 
    if such a constant positive lower bound exists, the population sequence 
    $\{\mathcal{Q}^g\}$ converges in probability to $\mathcal{S}_\varepsilon^*$, 
    thus establishing  \eqref{eq:convergence}.
\end{proof}

\subsubsection{Complexity analysis}
The initialization requires $O(QK(L+N+1))$ operations. Each mutation-crossover step also costs $O(K(L+N+1))$, yielding a per-iteration complexity of $O(QK(L+N+1))$. Updating the parameter-adaptation history adds an additional $O(Q)$ overhead. Hence, the total computational complexity of the IDE algorithm is $O(GQK(L+N+1))$, where $G$ is the number of generations. The full procedure is summarized in Algorithm~\ref{alg:ide}.

\begin{remark}
    \textcolor{black}{From a theoretical perspective, DE is particularly well-suited to problem~\eqref{problem-1} for three reasons. First, the hybrid encoding in~\eqref{eq:x_i} as binary association variables and real-valued power levels, is naturally handled by DE's mutation/crossover operations followed by threshold-based binarization, without requiring problem-specific operators as in Genetic Algorithms (GAs). Second, via SHADE adaptation~\cite{tanabe2013success}, our IDE automatically adjusts \textsf{F} and \textsf{CR}, reducing the manual tuning burden typical of Particle Swarm Optimization (PSO) and GAs while mitigating sensitivity to initial parameter settings. Third, as established in Theorem~\ref{theorem:convergence} and Corollary~\ref{corollary:probability}, the IDE satisfies sufficient conditions for convergence in probability to the $\varepsilon$-optimal set, a theoretical guarantee absent from many competing metaheuristics.}
\end{remark}

\section{Numerical Results} \label{sec:result}
This section reports {our} experimental findings, {characterizing} the system’s performance and the {pair of} proposed algorithms against state-of-the-art benchmarks. The simulation parameters follow the 3GPP specifications in \cite{3gpp2025release20}, with large-scale and small-scale fading in the channel models. The satellite parameters are derived from practical simulations of the Starlink constellation \cite{al2021session}. We simulate networks with up to 3 satellites, 70 APs, and 50 users uniformly distributed over a 15 km$^2$ area, as shown in Fig.~\ref{fig:model}. The LEO satellites are located at $(x^{\mathrm{SAT}}, y^{\mathrm{SAT}}, 300)$ km, where $x^{\mathrm{SAT}}, y^{\mathrm{SAT}} \in [300,400]$, and use up to 100 antennas with 26.9 dBi gain, versus 10 dBi for ground devices~\cite{9473753}. The system operates over a 100 MHz bandwidth at 20 GHz with a coherence block of {10,000} OFDM subcarriers. Each symbol is transmitted at 20 dBW~\cite{bisognin2015millimeter}. 
The noise figures are 6 dB at the APs and 1.3 dB at the satellite. All parameters follow the 3GPP specifications in~\cite{3gpp2025release20} for rural propagation, including large-scale and small-scale fading.
For the energy-efficiency evaluation, the parameters are: users’ power amplifier efficiency $\gamma = 0.4$, transmit power $\rho_k = 0.2\,\text{W}$, and circuit power $\mathcal{P}_{\text{tc,k}}^{\text{UE}} = 0.1\,\text{W}$. The static circuit power per antenna is $\mathcal{P}^{\text{AP}}_{\text{tc,n}} = 0.2\,\text{W}$ at each AP and $\mathcal{P}^{\text{SAT}}_{\text{tc,l}} = 0.6\,\text{W}$ at each satellite. The fixed backhaul power of each AP is $\mathcal{P}^{\text{AP}}_{\text{0,n}} = 0.825\,\text{W}$, and the traffic-dependent coefficient is $\mathcal{P}^{\text{AP}}_{\text{bt,n}} = 0.25 \times 10^{-9}\,\text{W/bit/s}$. For satellites, the fixed feeder-link power and traffic-dependent coefficient are $\mathcal{P}^{\text{SAT}}_{\text{0,l}} = 3\,\text{W}$ and $\mathcal{P}^{\text{SAT}}_{\text{bt,l}} = 5 \times 10^{-9}\,\text{W/bit/s}$, respectively. 
The IDE algorithm's parameters are configured as follows. Thealgorithm uses a population size $Q = \min(4D, 300)$, where $D = K(L+N+1)$ is the chromosome length~[63], and a maximum number of generations isof $G = 500$ generations. Instead of fixed scaling factor $\textsf{F}$ and crossover rate $\textsf{CR}$, we use SHADE to adaptively sample $\textsf{F}_i \sim \mathcal{C}(\mu_{\textsf{F}}, \upsilon)$ and $\textsf{CR}_i \sim \mathcal{N}(\mu_{\textsf{CR}}, \upsilon)$ from historical successes, with initial values $\mu_{\textsf{F}} = 0.2$ and $\mu_{\textsf{CR}} = 1.0$~\cite{tanabe2013success}. The mutation-strategy probability $\lambda$ is initialized to $0.2$ and updated every $T=10$ generations according to~\eqref{eq:update-f-cr-1}.

\begin{figure}[!t]%
    \centering
    \includegraphics[width=0.5\textwidth]{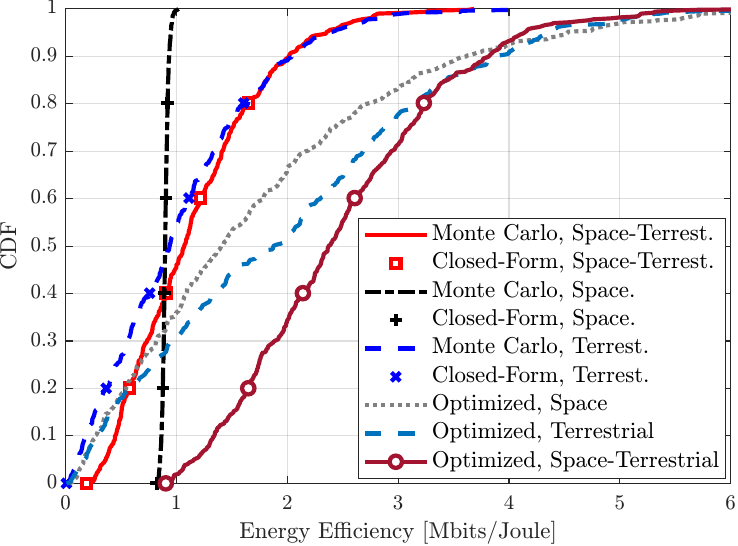}
    \caption{CDF comparison of Monte Carlo simulations, closed-form analysis, and optimized resource allocation under MRC.}
    \label{fig:MC-CF}
\end{figure}

\paragraph{\textbf{Analytical Derivations versus Monte Carlo Simulations}}

\begin{figure*}[ht]
    \centering
    \subfloat[]{\includegraphics[trim={0cm 0.5mm 0 0}, width=0.34\textwidth]{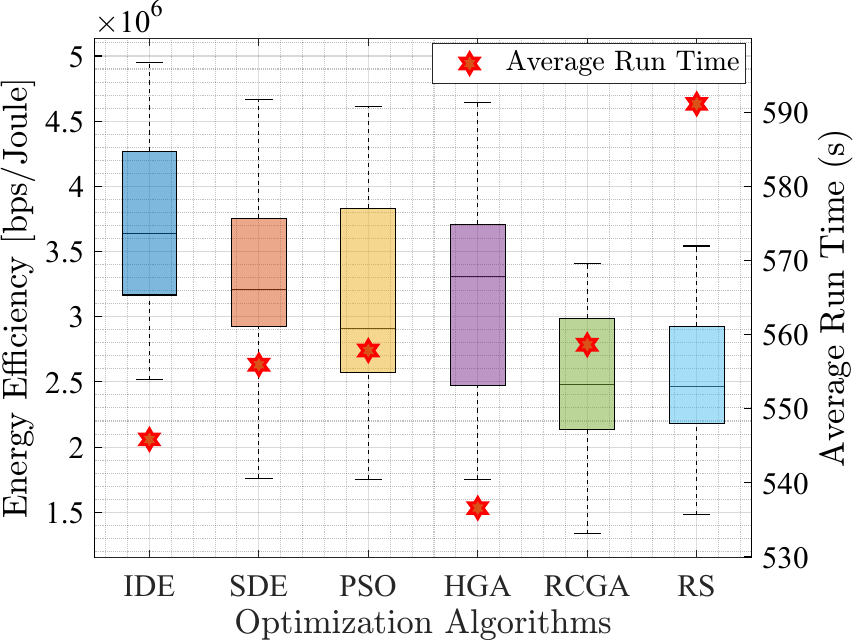}\label{fig:large_cp}}\hfill
    \subfloat[]{\includegraphics[width=0.317\textwidth]{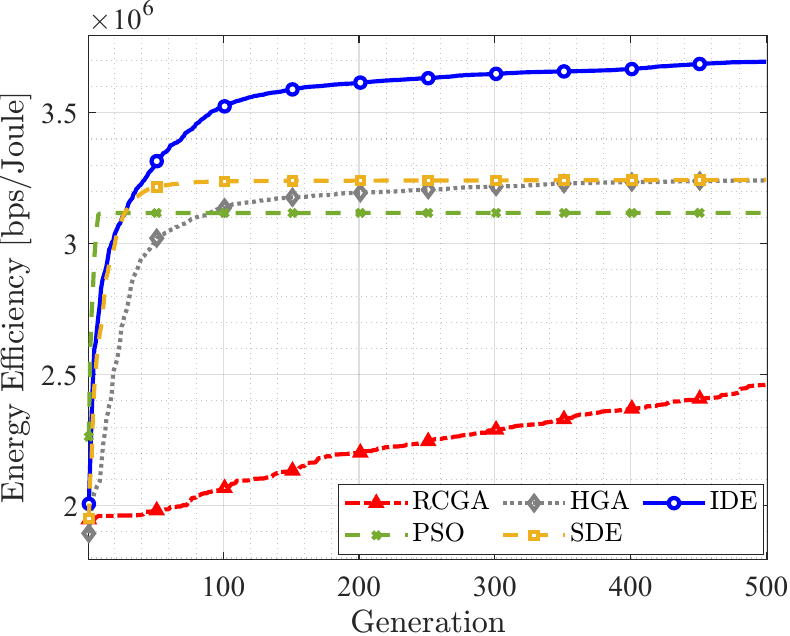}\label{fig:cvg_cp}}\hfill
    \subfloat[]{\includegraphics[width=0.32\textwidth]{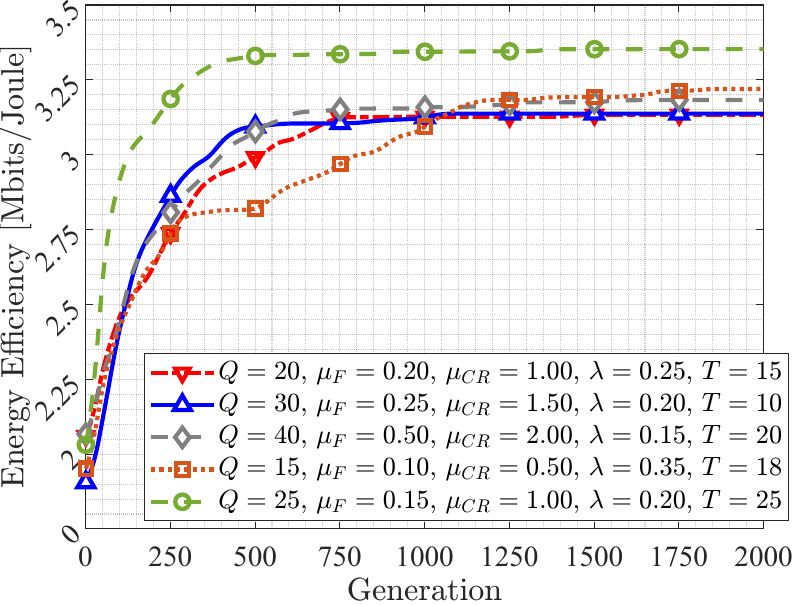}\label{fig:cvg_ide}}
    \caption{(a), (b) Comparison of the effectiveness proposed algorithm IDE via three metrics as optimal fitness value (Energy Efficiency), average run time and convergence with RS and other Meta Heuristic Algorithms. (c) Convergence behavior of the proposed IDE algorithm under different parameter
configurations.}
    \label{fig:compare_IDE_SearchAlg}
\end{figure*}

\begin{figure}[t]
    \centering
    \includegraphics[width=.9\linewidth]{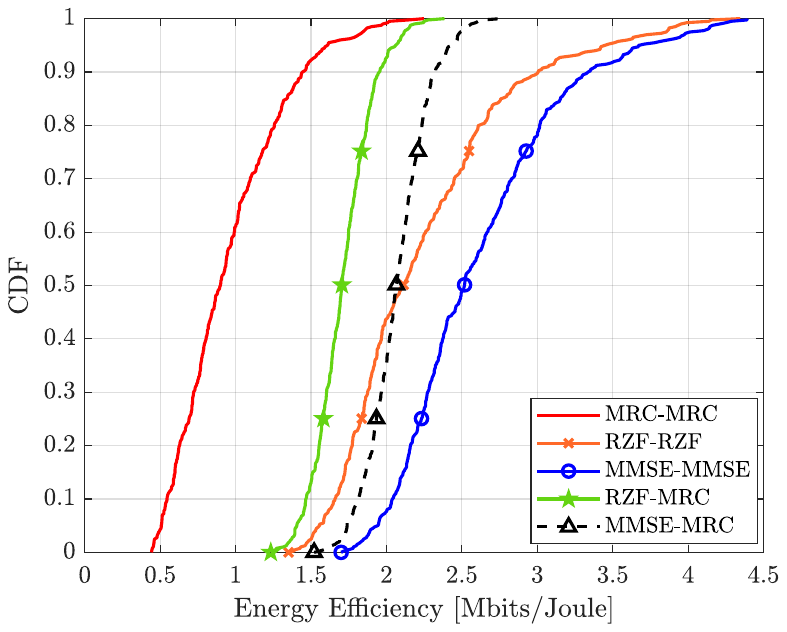}
    \caption{Performance comparison of different linear combining strategies.}
    \label{fig:CBV}
\end{figure}

Fig.~\ref{fig:MC-CF} illustrates the cumulative distribution function (CDF) of the energy efficiency for the space-terrestrial, terrestrial-only, and space-only systems, comparing Monte Carlo (MC) simulations with the corresponding closed-form analytical results.
It is observed that the analytical results closely match the MC simulations, confirming the accuracy of the closed-form expressions {derived}. The hybrid space-terrestrial architecture significantly outperforms both the terrestrial-only and space-only counterparts. Specifically, the proposed system achieves an average energy efficiency gain of $25\%$ and $20\%$ compared to the terrestrial-only and space-only baseline. \textcolor{black}{Furthermore, in practical deployments, satellites or terrestrial APs may become unavailable due to hardware failures, severe weather conditions, geographical obstacles, or natural disasters. In such cases, the proposed architecture can seamlessly switch to a space-only or terrestrial-only operating mode while preserving service continuity. The results demonstrate that user-centric optimization consistently achieves higher energy efficiency than conventional baseline architectures under these degraded conditions. Mathematically, this adaptability is enabled by the unified optimization framework, dynamically reconfiguring user association and resource allocation according to the set of active network nodes.}

\paragraph{\textbf{Comparison of IDE with Other Optimized Algorithms}}
We compare the performance of the proposed IDE algorithm against various benchmarks in Fig.~\ref{fig:compare_IDE_SearchAlg}
For the network configurations with 20 users, 40 APs, and two satellites, we benchmark the IDE algorithm against several meta-heuristic schemes (e.g., PSO \cite{shami2022particle}, Hybrid Genetic Algorithm (HGA)\cite{11134755}, Standard DE (SDE) \cite{mohamed2021differential} and Real-coded Genetic Algorithm (RCGA))\cite{van2024performance}, as well as Random Search (RS). As illustrated in Fig.~\ref{fig:large_cp}, the proposed IDE approach attains a better optimal solution, while significantly reducing the execution time. Moreover, Fig.~\ref{fig:cvg_cp} demonstrates that IDE exhibits a more stable and faster convergence {than} the other meta-heuristic algorithms across multiple iterations. To further assess parameter sensitivity, the proposed algorithm is evaluated under five distinct parameter configurations, as illustrated in Fig.~\ref{fig:cvg_ide}, to provide a more comprehensive examination of its robustness. The results demonstrate that the algorithm maintains consistent performance across all tested settings, with all configurations converging to stable energy efficiency values in the range of approximately $3.2-3.35$ Mbits/Joule, thereby confirming its robustness to parameter variation. While the convergence speed exhibits some dependence on the configuration chosen, most notably on the population size $Q$, the algorithm reliably reaches a stable point in all cases.

\paragraph{\textbf{Observing Various Linear Combining Techniques}}
We evaluate the system performance under various linear combining strategies applied at both the terrestrial and the space. Specifically, we compare the case where MRC is adopted for both the space and terrestrial segments with setups employing RZF or MMSE at both sides, as well as hybrid schemes, where {either a} RZF or MMSE {RC} is implemented at the CPU for terrestrial processing while MRC is used for the satellite signals. The results {of} Fig.~\ref{fig:CBV} demonstrate that changing the {RC} technique {is capable of enhancing} the system’s energy efficiency by up to a factor of 2.5 on average. These findings confirm the advantages of our linear {RC}, which effectively supports the cooperation between the satellite and distributed APs by leveraging the available channel estimates.

\paragraph{\textbf{User Performance Analysis and System Insights}}
{Table~\ref{tab:user_percent} presents the user association and energy efficiency results across three network scenarios, highlighting the interaction between the APs and the satellites in the hybrid network. The configurations {considered} are as follows: Scenario 1 with 10 users, 20 APs, and 1 satellite; Scenario 2 with 30 users, 40 APs, and 2 satellites; and Scenario 3 with 50 users, 60 APs, and 3 satellites.
It is observed that most users tend to associate either with {terrestrial} APs only or jointly with both APs and the satellite, while no users rely solely on the satellite link. This behavior underlines the dominant role of terrestrial APs in maintaining reliable connectivity and energy-efficient communication. Furthermore, as the network scales up from Scenario 1 to Scenario 3, the proportion of users connected to both APs and satellites increases significantly, reaching up to 70\%. This trend highlights the crucial role of joint AP-satellite cooperation in enhancing network coverage and overall system performance in large-scale configurations.}

Fig.~\ref{fig:SumconnectEachUser} shows per-user connectivity and performance under sum energy efficiency optimization in \eqref{problem-1}. Most users are mainly served by terrestrial APs, while only a few also use satellite links. Users connected to both layers achieve higher rates due to spatial diversity. However, maximizing sum energy efficiency can deactivate some users or assign negligible rates, since the optimization favors the most energy-efficient links.
To improve fairness, we also consider proportional energy efficiency (Fig.~\ref{fig:ProdconnectEachUser}), defined as $\prod_{k=1}^K R_k / \mathcal{P}^\mathrm{total}$. This yields a more balanced rate distribution but requires about 5.17 times more power than sum energy efficiency, which explains the prevalence of the latter in prior work. To move beyond fixed choices such as sum or proportional energy efficiency, we conduct a Pareto-based analysis (Fig.~\ref{fig:pareto}) to expose the trade-offs among the three metrics. The results show that higher minimum user rates generally require more transmit power, while power-minimizing solutions reduce both fairness and energy efficiency. The Pareto-optimal set enables flexible system design, and a deeper study of multi-objective and Pareto-based frameworks, as in~\cite{7570253,8957702,9762643,10258452}, is an important direction for future research.

To provide deeper insights into the performance contributions of individual system parameters, Fig.~\ref{EEPmax} investigates the energy efficiency behavior of the maximum transmit power ($P_{max}$) across four distinct operational scenarios. The results clearly demonstrate that jointly optimizing both variables consistently yields the highest performance. Specifically, this joint approach outperforms both the individual optimization schemes and the baseline fixed-architecture, delivering substantial energy efficiency improvements ranging from 20\% to 260\%. Furthermore, the extended simulation reveals a crucial underlying dynamic: while joint optimization represents the performance upper bound, power allocation plays a significantly more dominant role in the trade-off than user association. By proactively curtailing the transmit power, the system simultaneously addresses energy conservation and inter-user interference mitigation, thereby robustly safeguarding energy efficiency across the entire transmit power regime.
\begin{table}[t]
\caption{User association and energy efficiency for various scenarios.}\label{tab:user_percent}
\resizebox{\columnwidth}{!}{%
\begin{tabular}{lccc}
\hline \cline{1-4}
                                    \rule{0pt}{2.5ex}    & \multicolumn{3}{c}{\textbf{Scenario}} \\ \cmidrule(lr){2-4}
\multicolumn{1}{c}{\textbf{}}   \rule{0pt}{2.5ex}        & \textbf{{[}1{]}}     & \textbf{{[}2{]}}    & \textbf{{[}3{]}}    \\ \hline
\rule{0pt}{2.5ex} Users with AP only (\%)                & 65.00       & 48.33      & 30.00      \\
\rule{0pt}{2.5ex} Users with satellite only (\%)         & 0.00        & 0.00       & 0.00       \\
\rule{0pt}{2.5ex} Users with both (satellite+AP) (\%)    & 35.00       & 51.67      & 70.00      \\\hline
\rule{0pt}{2.5ex} Average Energy Efficiency {[}bps/J{]} & 3.56e6      & 2.84e6     & 2.04e6     \\ \hline \cline{1-4}
\end{tabular}%
}
\begin{tablenotes}
{\scriptsize \item \textcolor{black}{``Users with AP only" are connected to $\geq$ 1 terrestrial AP but no satellites; ``Users with satellite only" to $\geq$ 1 satellite but no APs; and ``Users with both (satellite+AP)" to both simultaneously. "\%" denotes each category's share of the total user population per scenario.}}
\end{tablenotes}
\end{table}

\begin{figure}[t]
    \centering
    \subfloat[]{\includegraphics[width=0.44\textwidth,trim={0cm 0 -1.1cm 0},clip]{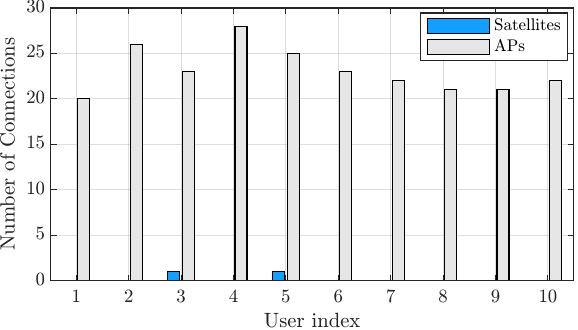}\label{fig:sumconnecta}} \\
    \subfloat[]{\includegraphics[width=0.43\textwidth,trim={0cm 0mm 1mm 0mm},clip]{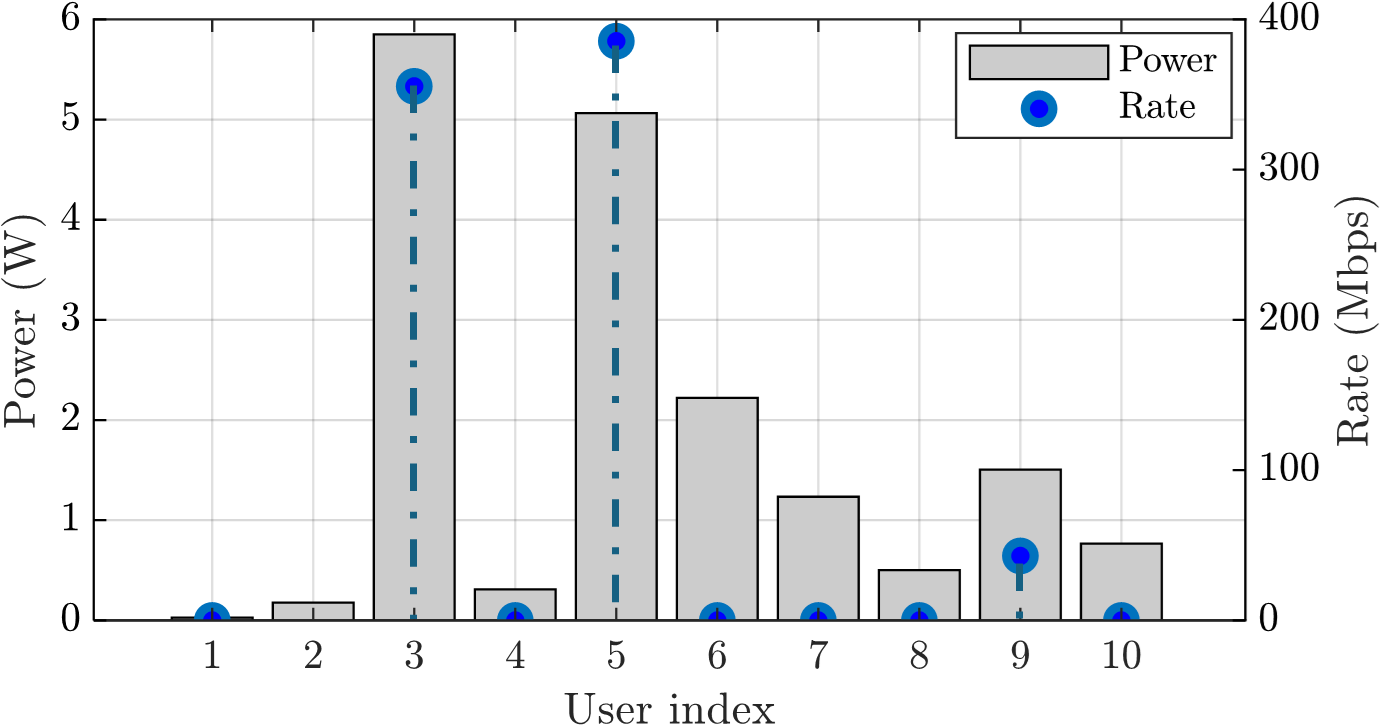}\label{fig:sumconnectb}}
    \caption{Per-user metrics when optimizing Sum Energy Efficiency. (a) Connection allocation. (b) Power and Rate.}
    \label{fig:SumconnectEachUser}
\end{figure}

\begin{figure}[t]
    \centering
    \subfloat[]{\includegraphics[width=0.44\textwidth,trim={0cm 0 -1.1cm 0},clip]{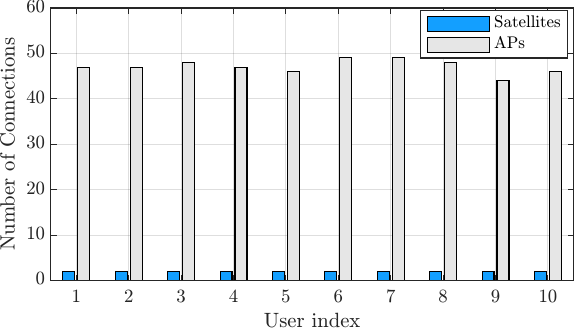}} \\
    \subfloat[]{\includegraphics[width=0.43\textwidth]{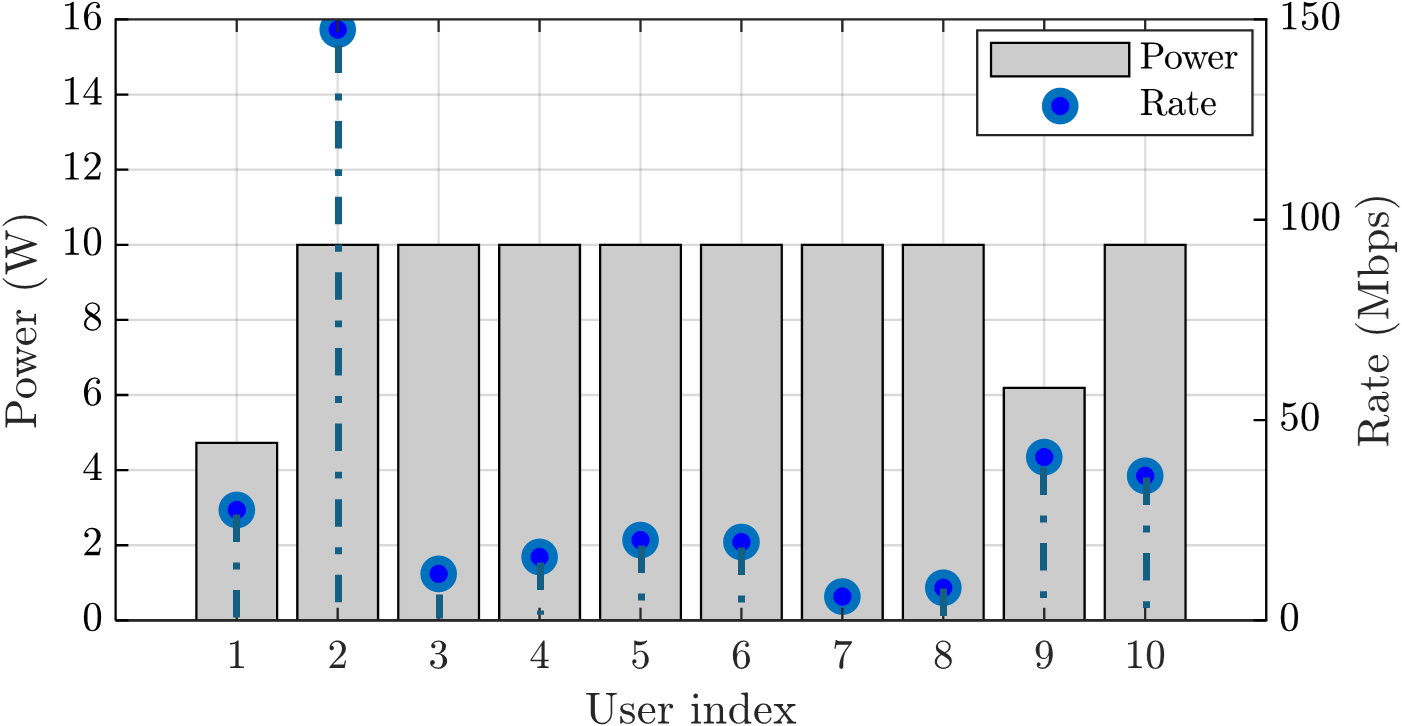}}
    \caption{Per-user metrics when optimizing Proportional Energy Efficiency. (a) Connection allocation. (b) Power and Rate.}
    \label{fig:ProdconnectEachUser}
\end{figure}

\begin{figure}
    \centering
    \includegraphics[width=\linewidth]{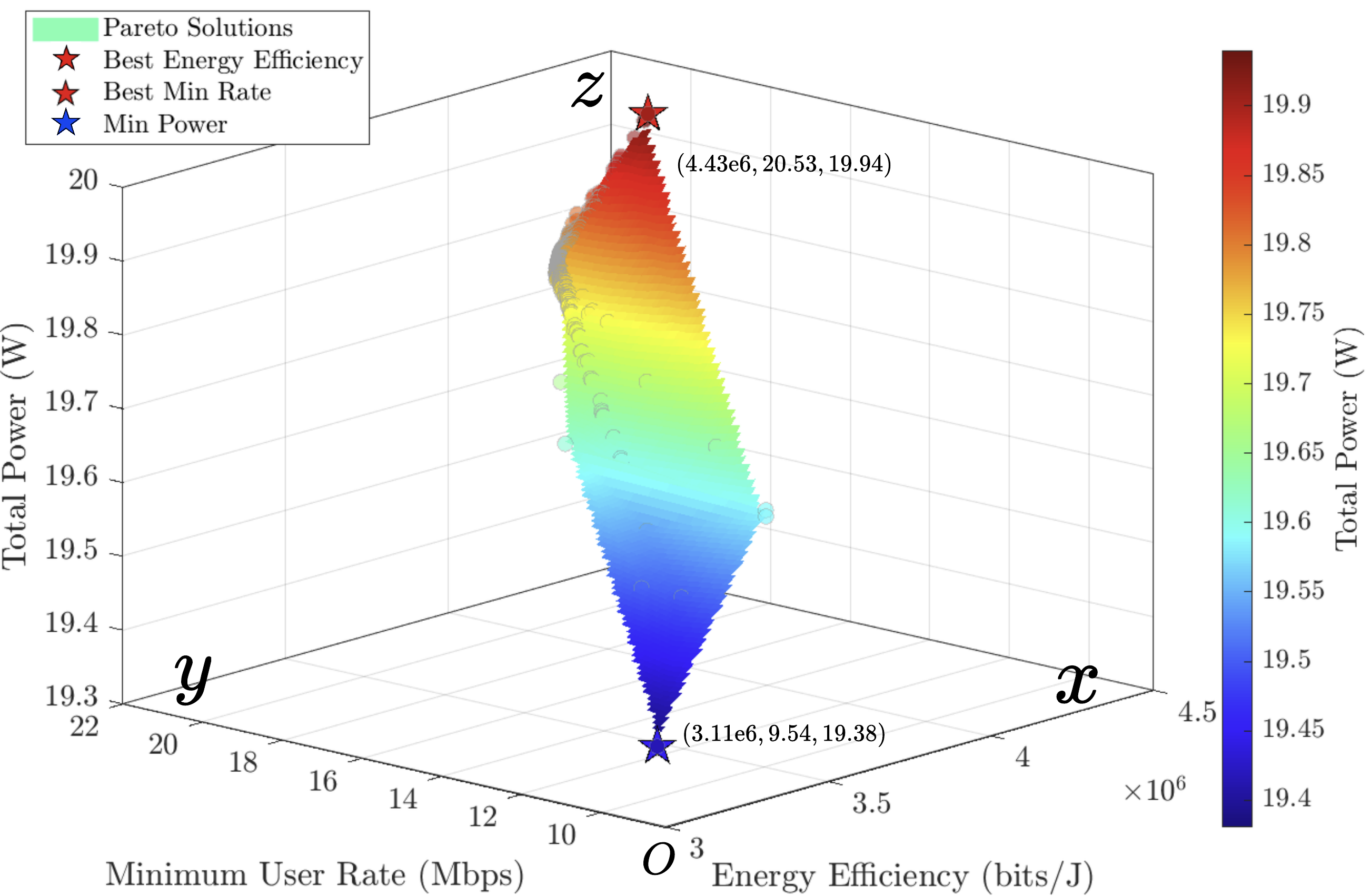}
    \caption{Pareto frontier of Energy Efficiency, Minimum rate, and Total power.}
    \label{fig:pareto}
\end{figure}

\begin{figure}
    \centering
    \includegraphics[width=.8\linewidth]{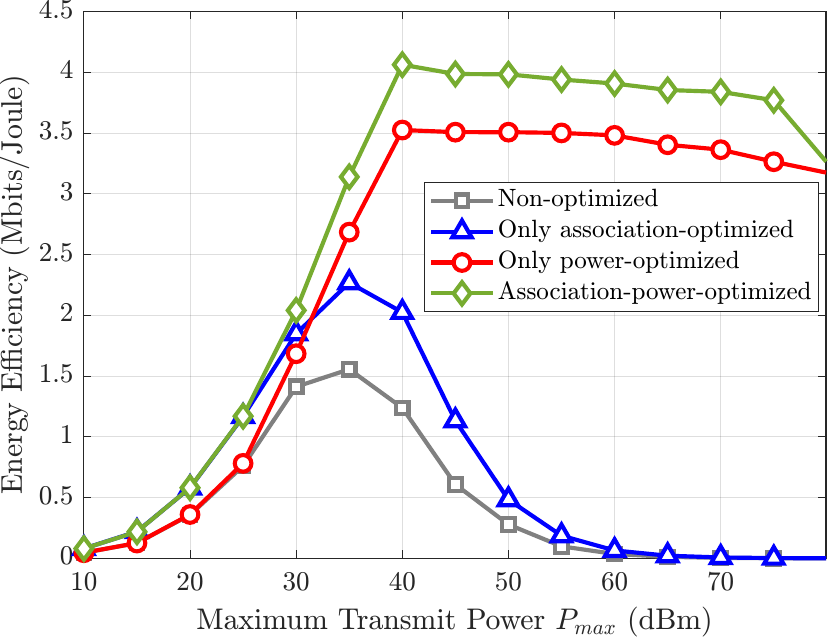}
    \caption{Energy efficiency versus maximum transmit power ($P_{max}$) under varying optimization strategies.}
    \label{EEPmax}
\end{figure}

\section{Conclusions}\label{sec:Conclusion}
A hybrid satellite-cell-free massive MIMO architecture was designed for NG networks. We derived a closed-form expression for the uplink ergodic throughput using MRC over spatially correlated Rician fading channels, and formulated a joint energy-efficient optimization problem for user association and power control. To solve the resultant NP-hard problem, an IDE algorithm with adaptive parameters and dual mutation strategies was proposed, achieving near-optimal performance {at} acceptable computational complexity and proven convergence. Simulation results confirmed that the proposed hybrid satellite-AP cooperation significantly enhances both ergodic throughput and energy efficiency, particularly for users {having} poor channel conditions, while the IDE algorithm outperforms benchmark metaheuristics in convergence speed and solution quality. This study opens promising directions for integrating intelligent reflecting surfaces and learning-assisted optimization to enable scalable and adaptive resource management in future space--terrestrial networks.

\appendix
\subsection[Proof of Lemma \ref*{lemma:ChannelEst}]{Proof of Lemma~$\ref{lemma:ChannelEst}$}\label{Apx:ChannelEst}
Firstly, we consider {terrestrial} channel estimation.
To derive the MMSE estimate of the ground channel $\mathbf{g}_{nk}$, we exploit the structure of the received pilot signal at the AP~$n$. The received signal $\mathbf{Y}_n^{\mathrm{AP},\text{pilot}}$ contains superimposed pilot transmissions from all users, which we decompose by projecting onto the specific pilot sequence $\pmb{\phi}_k$ of user $k$. Given the observation $\mathbf{y}_{nk}^{\mathrm{AP}} = \mathbf{Y}_n^{\mathrm{AP},\text{pilot}}\pmb{\phi}_k$ and the Gaussian prior distribution of $\mathbf{g}_{nk} \sim \mathcal{CN}(\mathbf{0}, \mathbf{G}_{nk})$, the MMSE estimator computes the conditional expectation
$\hat{\mathbf{g}}_{nk} = \mathbb{E}\{\mathbf{g}_{nk} | \mathbf{y}_{nk}^{\mathrm{AP}}\} = \mathbf{G}_{nk}(\sqrt{\rho K}\mathbf{I}_P)^H \big(\rho K\,\mathbf{G}_{nk}+\sigma_a^2\mathbf{I}_P\big)^{-1}\mathbf{y}_{nk}^{\mathrm{AP}}.$
Defining the auxiliary matrix $\mathbf{\Psi}_{nk} = (\rho K\,\mathbf{G}_{nk}+\sigma_a^2\mathbf{I}_P)^{-1}$, we obtain the compact representation
$\hat{\mathbf{g}}_{nk} = \sqrt{\rho K}\,\mathbf{G}_{nk}\,\mathbf{\Psi}_{nk}\,\mathbf{Y}_n^{\mathrm{AP},\text{pilot}}\pmb{\phi}_k.$
The posterior covariance of the estimate is given by
$\mathrm{Cov}(\hat{\mathbf{g}}_{nk}) = \rho K\,\mathbf{G}_{nk}\,\mathbf{\Psi}_{nk}\,\mathbf{G}_{nk} \triangleq \rho K\,\mathbf{\Omega}^{\mathrm{AP}}_{nk},$
where $\mathbf{\Omega}^{\mathrm{AP}}_{nk} = \mathbf{G}_{nk}\mathbf{\Psi}_{nk}\mathbf{G}_{nk}$. The estimation error $\mathbf{e}_{nk} = \mathbf{g}_{nk} - \hat{\mathbf{g}}_{nk}$ follows the distribution
$\mathbf{e}_{nk} \sim \mathcal{CN}\big(\mathbf{0}, \mathbf{C}^{\mathrm{AP}}_{e,nk}\big),$
where $\mathbf{C}^{\mathrm{AP}}_{e,nk} = \mathbf{G}_{nk} - \rho K\,\mathbf{G}_{nk}\mathbf{\Psi}_{nk}\mathbf{G}_{nk}$.
In this case, when $\mathbf{G}_{nk} = \beta^{\mathrm{AP}}_{nk}\mathbf{I}_P$ as spatially uncorrelated channels, the expressions simplify to:
\begin{align}
\mathbf{\Psi}_{nk} &= (\rho K\,\beta^{\mathrm{AP}}_{nk}+\sigma_a^2)^{-1}\mathbf{I}_P,\\
\hat{\mathbf{g}}_{nk} &= {\sqrt{\rho K}\,\beta^{\mathrm{AP}}_{nk}}{(\rho K\,\beta^{\mathrm{AP}}_{nk}+\sigma_a^2)^{-1}}\,\mathbf{Y}_n^{\mathrm{AP},\text{pilot}}\pmb{\phi}_k,
\end{align}
with $
\hat{\mathbf{g}}_{nk}\sim \mathcal{CN}\left(\mathbf{0}, {\rho K\,(\beta^{\mathrm{AP}}_{nk})^2}/(\rho K\,\beta^{\mathrm{AP}}_{nk}+\sigma_a^2)\,\mathbf{I}_P\right).$
Similarly, for the space link channel $\mathbf{h}_{lk}$ between user $k$ and the satellite~$l$, we apply the LMMSE estimator due to LoS component $\bar{\mathbf{h}}_{lk}$.
Given the observation $\mathbf{y}_{lk}^{\mathrm{SAT}} = \mathbf{Y}_l^{\mathrm{SAT},\text{pilot}}\pmb{\phi}_k$ and the prior $\mathbf{h}_{lk} \sim \mathcal{CN}(\bar{\mathbf{h}}_{lk}, \mathbf{H}_{lk})$, the LMMSE estimate is $\hat{\mathbf{h}}_{lk} = \bar{\mathbf{h}}_{lk}  + \mathbf{H}_{lk}(\sqrt{\rho K}\mathbf{I}_M)^H \big(\rho K\,\mathbf{H}_{lk}+\sigma_s^2\mathbf{I}_M\big)^{-1} \big(\mathbf{y}_{lk}^{\mathrm{SAT}}-\sqrt{\rho K}\,\bar{\mathbf{h}}_{lk}\big).$ Defining $\pmb{\Phi}_{lk} = (\rho K\,\mathbf{H}_{lk}+\sigma_s^2\mathbf{I}_M)^{-1}$, we obtain:
\begin{align}
\hat{\mathbf{h}}_{lk} = \bar{\mathbf{h}}_{lk} + \sqrt{\rho K}\,\mathbf{H}_{lk}\,\pmb{\Phi}_{lk}\,\big(\mathbf{Y}_l^{\mathrm{SAT},\text{pilot}}\pmb{\phi}_k - \sqrt{\rho K}\,\bar{\mathbf{h}}_{lk}\big).
\end{align}
The posterior covariance and estimation error follow analogously $\mathrm{Cov}(\hat{\mathbf{h}}_{lk}) = \rho K\,\mathbf{H}_{lk}\,\pmb{\Phi}_{lk}\,\mathbf{H}_{lk} \triangleq \rho K\,\mathbf{\Omega}_{lk},
\mathbf{e}_{lk} = \mathbf{h}_{lk} - \hat{\mathbf{h}}_{lk} \sim \mathcal{CN}\big(\mathbf{0}, \mathbf{C}^{\mathrm{SAT}}_{e,lk}\big),
$ where $\mathbf{C}^{\mathrm{SAT}}_{e,lk} = \mathbf{H}_{lk} - \rho K\,\mathbf{H}_{lk}\pmb{\Phi}_{lk}\mathbf{H}_{lk}$.

\subsection{Proof of Theorem~\ref{Theorem:SE}}\label{appendix:SE}
By considering the composite channel and substituting $k' = k$, the numerator in equation~\eqref{eq:SINRk} can be expressed as:
\begin{equation}\label{eq:e{okk}2}
|\mathbb{E}\{ o_{kk} \}|^2 = \sum\nolimits_{l=1}^L \Big (\| \bar{\mathbf{h}}_{lk} \|^2 + pK\,\mathrm{tr}(\boldsymbol{\Omega}_{lk}) \Big) + \sum\nolimits_{n=1}^{N} \mathrm{tr}(\mathbf{\Theta}_{nk}),
\end{equation}
which is obtained according to the channel distribution assumed and the MRC combining method. The first sum captures the coherent satellite signal power, combining the LoS component $\|\bar{\mathbf{h}}_{lk}\|^2$ and the NLoS estimation gain $\rho K\,\mathrm{tr}(\boldsymbol{\Omega}_{lk})$, while the second sum represents the terrestrial AP contribution through the MMSE estimate variance $\mathrm{tr}(\boldsymbol{\Theta}_{nk})$.
The first term in the denominator of~\eqref{eq:SINRk} is denoted as
\begin{equation}\label{eq:D1}
    \mathrm{D}_1 = \sum\nolimits_{k'=1}^{K} \rho_{k'}\,\mathbb{E}\{|o_{kk'}|^2\} = \rho_k\,\mathbb{E}\{|o_{kk}|^2\} + \mathrm{D}_2,
\end{equation}
where $\mathrm{D}_2 = \sum_{k'=1,\,k'\neq k}^{K} \rho_{k'}\,\mathbb{E}\{|o_{kk'}|^2\}$ collects the inter-user interference power from all $k' \neq k$.
To compute $\mathbb{E}\{|o_{kk}|^2\}$, we decompose $o_{kk}$ into four components and expand:
\begin{align}\label{eq:expandE{o^2}}
\mathbb{E}\{|o_{kk}|^2\}
= \mathbb{E}\{|a_{kk} + \tilde{a}_{kk} + b_{kk} + \tilde{b}_{kk}|^2\} = \mathbb{E}\{|a_{kk}|^2\} \notag\\+  \mathbb{E}\{|\tilde{a}_{kk}|^2\}  + \mathbb{E}\{|b_{kk}|^2\} + \mathbb{E}\{|\tilde{b}_{kk}|^2\}
+ 2\mathbb{E}\{a_{kk}b_{kk}\},
\end{align}
where $a_{kk} = \sum_{l=1}^L \|\hat{\mathbf{h}}_{lk}\|^2$ is the aggregate satellite LoS power,
$\tilde{a}_{kk} = \sum_{l=1}^L \hat{\mathbf{h}}_{lk}^H \mathbf{e}^{\mathrm{SAT}}_{lk}$ is the satellite estimation error term,
$b_{kk} = \sum_{n=1}^{N} \|\hat{\mathbf{g}}_{nk}\|^2$ is the terrestrial estimated channel power, and
$\tilde{b}_{kk} = \sum_{n=1}^{N} \hat{\mathbf{g}}_{nk}^H\mathbf{e}^{\mathrm{AP}}_{nk}$ is the AP-side estimation error term.
Cross-terms involving AWGN vanish due to this zero mean. The cross-terms between satellite and AP estimation errors also vanish by independence of the two links. We evaluate each expectation in turn.
The first expectation $\mathbb{E}\{|a_{kk}|^2\}$ captures the squared LoS satellite power, summed over all $L$ satellites. Expanding via the second moment of a sum of correlated Gaussian variables yields:
\begin{equation}\label{eq:1stterm}
\begin{aligned}
   \mathbb{E}\{|a_{kk}|^2\}
& = \Big(\sum\nolimits_{l=1}^L(\|\bar{\mathbf{h}}_{lk}\|^2 + 2pK\,\mathrm{tr}(\mathbf{\Omega}_{lk}))\Big)^2
\\& + \sum\nolimits_{l=1}^L\Big( 2pK\,\bar{\mathbf{h}}_{lk}^H \mathbf{\Theta}_k \bar{\mathbf{h}}_{lk}
+ (pK)^2\,\mathrm{tr}(\mathbf{\Omega}_{lk}^2)\Big). 
\end{aligned}
\end{equation}
The second expectation $\mathbb{E}\{|\tilde{a}_{kk}|^2\}$ represents the satellite-side NLoS estimation error variance. Using the error covariance $\mathbf{C}^{\mathrm{SAT}}_{e,lk} = \mathbf{H}_{lk} - \rho K\,\boldsymbol{\Omega}_{lk}$ from Lemma~\ref{lemma:ChannelEst}, it evaluates to:
\begin{equation}
\begin{aligned}\label{eq:a~kk}
    \mathbb{E}\{|\tilde{a}_{kk}|^2\}
= &\sum\nolimits_{l=1}^L\Big(\bar{\mathbf h}_{lk}^H\mathbf H_{lk}\bar{\mathbf h}_{lk}
-\rho K\,\bar{\mathbf h}_{lk}^H\boldsymbol{\Omega}_{lk}\bar{\mathbf h}_{lk} \\& +\rho K\,\operatorname{tr}(\mathbf H_{lk}\boldsymbol{\Omega}_{lk})
-(\rho K)^2\,\operatorname{tr}(\boldsymbol{\Omega}_{lk}^{2})\Big).
\end{aligned}
\end{equation}
The third expectation $\mathbb{E}\{|b_{kk}|^2\}$ corresponds to the terrestrial AP's estimated channel power. Applying standard second-moment identities for Gaussian vectors gives:
\begin{equation}
\mathbb{E}\{|b_{kk}|^2\}
=\Big(\sum\nolimits_{n=1}^N\,\operatorname{tr}(\boldsymbol{\Theta}_{nk})\Big)^2
+\,\sum\nolimits_{n=1}^N\,\operatorname{tr}(\boldsymbol{\Theta}_{nk}^2).
\end{equation}
The fourth expectation $\mathbb{E}\{|\tilde{b}_{kk}|^2\}$ captures the AP-side estimation error variance. Exploiting the independence between $\hat{\mathbf{g}}_{nk}$ and $\mathbf{e}^{\mathrm{AP}}_{nk}$, and using the error covariance $\mathbf{C}^{\mathrm{AP}}_{e,nk}$ from Lemma~\ref{lemma:ChannelEst}:
\begin{equation}\label{eq:b~kk}
\mathbb{E}\{|\tilde{b}_{kk}|^2\}
= \sum\nolimits_{n=1}^N \Big(\operatorname{tr}(\mathbf G_{nk}\boldsymbol{\Theta}_{nk})-\operatorname{tr}(\boldsymbol{\Theta}_{nk}^2)\Big).
\end{equation}
Finally, the cross-term $\mathbb{E}\{a_{kk}b_{kk}\}$ represents the coherent combining gain between the satellite LoS power and the terrestrial estimate power. Since $a_{kk}$ and $b_{kk}$ are independent across the two links, the expectation factorizes as:
\begin{equation}\label{eq:5thterm}
\mathbb{E}\{a_{kk}b_{kk}\}
= 2 \Big[\sum\nolimits_{l=1}^L \big(\|\bar{\mathbf h}_{lk}\|^2+\rho K\,\operatorname{tr}(\boldsymbol{\Omega}_{lk})\big)\Big]
\Big[\sum\nolimits_{n=1}^N \,\operatorname{tr}(\boldsymbol{\Theta}_{nk})\Big].
\end{equation}
Upon substituting \eqref{eq:1stterm}-\eqref{eq:5thterm} into \eqref{eq:expandE{o^2}} and collecting terms, the closed-form expression for $\mathbb{E}\{|o_{kk}|^2\}$ becomes 
\begin{equation}\label{eq:|o_kk|^2}
\begin{aligned}
\mathbb{E}\{|o_{kk}|^2\}  =& \Big(\sum\nolimits_{l=1}^L \big(\|\bar{\mathbf h}_{lk}\|^2+\rho K\,\operatorname{tr}(\boldsymbol{\Omega}_{lk})\big)+\sum\nolimits_{n=1}^N \,\operatorname{tr}(\boldsymbol{\Theta}_{nk})\Big)^2\\& +\sum\nolimits_{n=1}^N \operatorname{tr}(\mathbf G_{nk}\boldsymbol{\Theta}_{nk})+ \sum\nolimits_{l=1}^L  \Big(\bar{\mathbf h}_{lk}^H\mathbf H_{lk}\bar{\mathbf h}_{lk}
\\&+\rho K\,\bar{\mathbf h}_{lk}^H\boldsymbol{\Omega}_{lk}\bar{\mathbf h}_{lk} +\rho K\,\operatorname{tr}(\mathbf H_{lk}\boldsymbol{\Omega}_{lk})\Big).
\end{aligned}
\end{equation}
Here the first squared term is the coherent combining gain (satellite LoS/NLoS plus terrestrial), while the remaining terms account for the residual interference power owing to imperfect CSI at both the satellite and AP sides.
Next, we evaluate the mutual interference term $D_{2}$ in \eqref{eq:D1}. Since users transmit orthogonal pilot sequences, their channels are mutually independent, and the joint expectations factor accordingly. The five lines of \eqref{eq:D2} correspond, in order, to: (i) NLoS inter-user interference at the satellite; (ii) LoS–NLoS cross-interference projected onto $\boldsymbol{\Omega}_{lk}$; (iii) LoS-projected interference through $\mathbf{H}_{lk'}$; (iv) pure LoS inter-user interference; and (v) terrestrial AP inter-user interference via $\mathbf{G}_{nk'}$ as
\begin{align}\label{eq:D2}
&D_2 = \sum\nolimits_{\substack{k'=1,\\ k' \neq k}}^{K} \rho_{k'} \left( \mathbb{E}\{ |\hat{\mathbf{h}}_{lk}^H \mathbf{h}_{lk'}|^2 \}
+ \sum\nolimits_{m=1}^{M} \mathbb{E}\{ |\hat{\mathbf{g}}_{nk}^H \mathbf{g}_{nk'}|^2 \} \right) \nonumber \\
&= \underbrace{pK \sum_{\substack{k'=1,\\ k' \neq k}}^{K} \sum_{l=1}^L\rho_{k'}\, \mathrm{tr}(\mathbf{H}_{lk'} \boldsymbol{\Omega}_{lk'})}_{\text{(i) NLoS satellite interference}} \nonumber + \underbrace{pK \sum_{\substack{k'=1,\\ k' \neq k}}^{K} \sum_{l=1}^L \rho_{k'}\, \bar{\mathbf{h}}_{lk'}^H \boldsymbol{\Omega}_{lk} \bar{\mathbf{h}}_{lk'}}_{\text{(ii) LoS--NLoS cross-interference}} \nonumber \\
& + \underbrace{\sum\nolimits_{\substack{k'=1,\\ k' \neq k}}^{K} \sum\nolimits_{l=1}^L \rho_{k'} \, \bar{\mathbf{h}}_{lk}^H \mathbf{H}_{lk'} \bar{\mathbf{h}}_{lk}}_{\text{(iii) LoS-projected interference}} \nonumber + \underbrace{\sum\nolimits_{\substack{k'=1,\\ k' \neq k}}^{K} \rho_{k'} \, \Big(\sum\nolimits_{l=1}^L \bar{\mathbf{h}}_{lk}^H \bar{\mathbf{h}}_{lk'}\Big)^2}_{\text{(iv) pure LoS interference}} \nonumber \\
& + \underbrace{\sum\nolimits_{\substack{k'=1,\\ k' \neq k}}^{K} \sum\nolimits_{n=1}^{N} \rho_{k'}\, \mathrm{tr}(\mathbf{G}_{nk'}\mathbf{\Theta}_{nk})}_{\text{(v) terrestrial AP interference}}.
\end{align}
The noise contributions from the satellite and AP links are evaluated next. For the satellite, using the independence between $\hat{\mathbf{h}}_{lk}$ and the noise $\mathbf{w}^\mathrm{SAT}_{lk}$ (step ($a$)) yields:
\begin{align}\label{eq:hwsat}
\sum\nolimits_{l=1}^L \mathbb{E}\{ |\hat{\mathbf{h}}_{lk}^H \mathbf{w}^\mathrm{SAT}_{lk}|^2 \}
&\stackrel{(a)}{=} \sum\nolimits_{l=1}^L\mathbb{E}\{ \hat{\mathbf{h}}_{lk}^H \mathbb{E}\{ \mathbf{w}\mathbf{w}^H \} \hat{\mathbf{h}}_{lk} \} \nonumber \\
&= \sigma_s^2 \Big(\| \bar{\mathbf{h}}_{lk} \|^2 + \rho K\,\mathrm{tr}(\boldsymbol{\Omega}_{lk})\Big),
\end{align}
which scales with the total satellite receive power. Similarly, the noise power at the APs is:
\begin{align}\label{eq:gwAP}
\sum\nolimits_{n=1}^{N} \mathbb{E}\{ |\hat{\mathbf{g}}_{nk}^H \mathbf{w}^\mathrm{AP}_{nk}|^2 \}
= \sigma_a^2 \sum\nolimits_{n=1}^{N} \mathrm{tr}(\mathbf{\Theta}_{nk}),
\end{align}
which is proportional to the aggregate MMSE estimate variance across all APs.
By substituting \eqref{eq:|o_kk|^2} and \eqref{eq:D2} into \eqref{eq:D1}, we obtain the closed-form expression for the first denominator term of~\eqref{eq:SINRk}. Combining this with \eqref{eq:e{okk}2}, \eqref{eq:hwsat}, and \eqref{eq:gwAP} completes the proof.
\bibliographystyle{IEEEtran}
\bibliography{refs}
\end{document}